\documentclass[sigconf]{acmart}
\renewcommand\footnotetextcopyrightpermission[1]{}
\setcopyright{none}
\acmConference{}{}{}
\acmISBN{}
\acmDOI{}
\acmPrice{}
\acmYear{}
\copyrightyear{}

\usepackage[framemethod=TikZ]{mdframed}

\usepackage{multicol}
\usepackage{multirow}
\usepackage{tikz}
\usepackage{amsmath}
\usepackage{enumitem} 
\usepackage{mathrsfs}
\usepackage{soul}
\usepackage{filecontents}
\usepackage{amsthm}
\usepackage{booktabs}
\usepackage{threeparttable}
\usepackage{cleveref}

\newtheorem{lemma}{Lemma}[section]
\crefname{lemma}{Lemma}{Lemmas}
\Crefname{lemma}{Lemma}{Lemmas}
\newtheorem{proposition}{Proposition}
\crefname{proposition}{Proposition}{Propositions}
\Crefname{proposition}{Proposition}{Propositions}
\newtheorem{corollary}{Corollary}
\crefname{corollary}{Corollary}{Corollaries}
\Crefname{corollary}{Corollary}{Corollaries}
\newtheorem{observation}{Observation}
\crefname{observation}{Observation}{Observations}
\Crefname{observation}{Observation}{Observations}
\newtheorem{assumption}{Assumption}
\crefname{assumption}{Assumption}{Assumptions}
\Crefname{assumption}{Assumption}{Assumptions}
\begin{document}

\title{Ordering Power is Sanctioning Power: Sanction-Evasion MEV and \\the Limits of On-Chain Enforcement}

\author{Di Wu}
\authornote{Work done at both Zhejiang University and The Hong Kong University of Science and Technology (Guangzhou).}
\affiliation{%
  \institution{Zhejiang University}
  \country{}
}
\affiliation{%
  \institution{ The Hong Kong University of Science and Technology (Guangzhou)}
  \country{}
}
\email{wu.di@zju.edu.cn}

\author{Yuman Bai}
\affiliation{\institution{Zhejiang University}\country{}}
\email{baiyuman@zju.edu.cn}

\author{Shoupeng Ren}
\affiliation{\institution{Zhejiang University}\country{}}
\email{spren@zju.edu.cn}

\author{Xinyu Zhang}
\affiliation{\institution{Sun Yat-Sen University}\country{}}
\email{zhangxy353@mail2.sysu.edu.cn}

\author{Yiyue Cao}
\affiliation{\institution{The Hong Kong University of Science and Technology (Guangzhou)}\country{}}
\email{ycao948@connect.hkust-gz.edu.cn}

\author{Xuechao Wang}
\authornote{Corresponding author.}
\affiliation{\institution{The Hong Kong University of Science and Technology (Guangzhou))}\country{}}
\email{xuechaowang@hkust-gz.edu.cn}

\author{Wu Wen}
\affiliation{\institution{Zhejiang University}\country{}}
\email{wu.wen@intl.zju.edu.cn}

\author{Jian Liu}
\affiliation{\institution{Zhejiang University}\country{}}
\email{liujian2411@zju.edu.cn}

\begin{abstract}

Centralized stablecoins such as USDT and USDC enforce financial sanctions through contract-layer blacklist functions, yet on a public blockchain the freeze itself is only an ordinary transaction that must compete with the sanctioned party's transfer for execution priority.
We identify a structural gap between contract-layer authority and ordering-layer enforcement in sanction races: when the two transactions race for inclusion in the same block, the outcome is decided not by regulatory mandate but by the economically motivated ordering choices of block producers.
Since either side can pay for priority, this contest induces a rent extracted by block producers, which we term \emph{Sanction-Evasion MEV (SE-MEV)}.
To quantify this gap in practice, we construct the first longitudinal dataset of on-chain sanctions enforcement and evasion for Ethereum-based USDT and USDC from November 2017 to August 2025, covering more than \$1.5 billion in frozen value. 
We find that at least 7.3\% of sanctioned USDT addresses and at least 18.7\% of sanctioned USDC addresses had already been drained to zero by the time the freeze took effect. We further document an escalation trajectory from issuer-side out-of-gas failures, to public gas auctions, to private order flow, and finally to direct payments to block producers. 
Together, these stages provide the first empirical evidence that block producers extract MEV from sanction enforcement. 
To understand the long-run consequences of this dynamic, we develop a game-theoretic model of stablecoin sanctions that explicitly incorporates MEV. 
The model yields three results: (i) a compliant issuer cannot rationally remain outside the ordering market; (ii) fixed participation costs concentrate evasion among a small set of specialized, MEV-aware adversaries; and (iii) the implicit MEV tax extracted by block producers grows without bound as regulatory penalties intensify, which in turn creates persistent incentives for issuers to vertically integrate into block-building infrastructure. 
This gap is not unique to stablecoin sanctions: the same conflict arises whenever a privileged on-chain action must be executed as an ordinary transaction, such as emergency pauses, governance interventions, or judicial freezes. Whenever ordering power is allocated by economic incentives, \textbf{ordering power is sanctioning power}, and contract-layer authority alone cannot guarantee enforcement.
\end{abstract}

\maketitle
\pagestyle{plain}

\newcommand{\Paragraph}[1]{\noindent \textbf{#1}}

\mdfdefinestyle{findingstyle}{%
  linecolor=black!70,
  linewidth=0.6pt,
  roundcorner=2pt,
  backgroundcolor=white,
  innertopmargin=2pt,
  innerbottommargin=2pt,
  innerleftmargin=8pt,
  innerrightmargin=8pt,
  skipabove=2pt,
  skipbelow=2pt,
}
\newmdtheoremenv[style=findingstyle]{finding}{Finding}


\section{Introduction}
Stablecoins have emerged as one of the most critical pieces of infrastructure within the cryptocurrency ecosystem, serving as the backbone for trading, lending, and cross-border payments~\cite{imf2025understanding}. 
As of April 2026, the total market capitalization of stablecoins had reached \$320 billion~\cite{DefillamaStablecoins}. 
Among them, centralized stablecoins, most notably USDT~\cite{tether} and USDC~\cite{circle}, dominate market share.
Their issuers can freeze assets at designated addresses through contract-layer blacklist functions. In prevailing industry and regulatory narratives, this capability is typically treated as a sufficient condition for effective sanctions enforcement: as long as the issuer possesses freeze authority, it can prevent a targeted address from transferring its assets further~\cite{genius-act-2025,mica-2023}. This intuition, however, does not automatically hold in the context of public blockchains, because a sanctions decision does not itself alter on-chain state.

A blacklist invocation is, in essence, an ordinary on-chain transaction. In networks such as Ethereum~\cite{wood2014ethereum}, where transaction ordering is allocated through economic incentives, under both Proof-of-Stake and Proof-of-Work, a sanctioned entity can submit a transfer transaction in the same block and compete directly with the freeze transaction for execution priority. This is not merely a hypothetical concern. We observe cases in which the issuer's freeze transaction entered the mempool earlier, yet the evasive transfer was confirmed on-chain first: the sanction was issued, but no target value was actually frozen.

We distinguish this phenomenon from broader cases of early fund movement, because not every transfer before a freeze raises an ordering-layer concern. If a target moves assets based on informational advantage or anticipated risk, the outcome is primarily explained by information and timing, rather than transaction ordering. Such cases fall outside the core problem studied in this paper. 
Our focus is instead on \emph{reactive} evasion, where the target's transfer and the issuer's freeze are triggered by the same external signal, such as a sanctions listing, a public hack attribution, or another publicly observable enforcement trigger. 
In this setting, both parties may simply react to the same off-chain event and submit competing transactions into the same ordering pipeline. At that point, enforcement no longer depends solely on whether the issuer possesses blacklist authority; it also depends on which transaction the ordering layer executes first. This raises a fundamental question: \textbf{when a freeze and a transfer compete directly for execution priority, who is effectively exercising sanctions enforcement power?}

The central claim of this paper is that, \textbf{in reactive sanctions races, contract-layer authority is only a necessary condition for enforcement. Whether sanctions are actually enforced depends on how the ordering layer allocates execution priority.} 
This reframes sanctions enforcement as an ordering-market problem: when execution priority can be influenced through incentives such as transaction fees, the right to determine whether a freeze or a transfer executes first becomes economically contestable. 
\emph{Maximal extractable value (MEV)}~\cite{daian2020flash} describes precisely this class of phenomena, where transaction-ordering infrastructure can profit from controlling execution priority. 
In reactive sanctions races, the confrontation between sanctioning and evading parties creates a profit opportunity for actors that can influence execution priority. 
We term this sanction-induced extractable value \emph{Sanction-Evasion MEV (SE-MEV)}.

To quantify this enforcement limitation induced by ordering layer competition systematically, we construct the first dataset of on-chain sanctions enforcement and evasion for Ethereum-based USDC and USDT from Nov 2017 to Aug 2025. Issuers froze more than \$1.5 billion in total. Yet under a conservative definition, 7.3\% of sanctioned USDT addresses and 18.7\% of sanctioned USDC addresses had already been drained to zero
balances by the time the freeze took effect. 
The empirical evidence further reveals a clear arms race over ordering power: from issuer side out-of-gas execution failures, to public gas bidding, to private transaction submission, and ultimately to evaders delegating execution to specialized MEV infrastructure, making sanction evasion increasingly embedded in the MEV supply chain.

While the empirical evidence establishes the phenomenon, understanding its long-run systemic consequences requires theory. We therefore develop the first game-theoretic model of stablecoin sanctions that explicitly incorporates MEV. The model yields three central results. First, in a reactive sanctions environment, compliant issuers cannot remain outside the ordering market. Second, fixed participation costs push evasion toward a small set of specialized, MEV-aware actors. Third, the implicit MEV tax extracted by block producers grows monotonically with regulatory penalty intensity. Taken together, these results show that as long as ordering power is controlled by profit-driven infrastructure, on-chain sanctions are not a one-off exercise of authority, but an ongoing confrontation with escalating enforcement costs.

The model also shows that repeated interaction creates structural incentives for issuers to vertically integrate into block-building infrastructure, placing direct pressure on blockchain neutrality. More broadly, the same structural vulnerability applies to any privileged action that must be implemented through ordinary transactions, e.g., emergency pauses, governance interventions, or judicial freezes. Whenever ordering power is allocated by economic incentives, a gap between nominal authority and effective enforcement becomes unavoidable.

\subsection{Our Contributions:}

\Paragraph{Problem definition.} We define reactive on-chain sanctions and identify the disconnect between contract-layer authority and ordering-layer enforcement as a system-level security limitation.

\Paragraph{Empirical measurement.} We construct the first sanctions dataset for Ethereum USDT/USDC and quantify the gap between nominal sanctions actions and realized freezing outcomes.

\Paragraph{Mechanism identification.} We introduce the concept of SE-MEV and provide the first empirical evidence that ordering-layer actors can extract MEV from the priority race between sanctioning and evading parties.

\Paragraph{Equilibrium analysis.} We develop a game-theoretic model of sanctions with MEV, showing that the hidden enforcement cost grows without bound as regulatory penalties intensify, and revealing structural incentives for issuers to vertically integrate into block-building infrastructure.

\section{Preliminaries}

\subsection{Blockchain and Smart Contracts}

Blockchains maintain a shared global ledger among mutually untrusted participants. Users submit transactions that request state transitions, and a consensus protocol establishes a canonical transaction order so that all honest nodes deterministically apply the same state updates on-chain. Blockchain platforms differ in their state models. Bitcoin~\cite{nakamoto2008bitcoin} adopts a UTXO-based design, whereas Ethereum uses an account-based model in which persistent state is associated with accounts and updated directly by transactions.
Ethereum further supports smart contracts, enabling general-purpose on-chain computation. Both externally owned accounts (EOAs) and contract accounts coexist in the global state: EOAs initiate transactions, while contract accounts encapsulate executable code and storage. Contract execution is deterministic under the Ethereum Virtual Machine (EVM), and successful executions commit state changes on-chain.

\subsection{Digital Assets and Tokens}

\noindent \textbf{Native Tokens.}
Blockchains natively support protocol-defined digital assets, referred to as native tokens, which are primarily used to pay transaction fees and incentivize consensus participation. Transfers of native tokens are enforced directly by the protocol.

\noindent \textbf{Fungible Tokens.}
Programmable blockchains additionally support application-level fungible tokens implemented via smart contracts. On Ethereum, the ERC-20 standard~\cite{vogelsteller2015eip20} specifies a common interface for such tokens, including balance management, transfers, and delegated transfers through the \texttt{approve} mechanism. 

\noindent \textbf{Stablecoins.}
Stablecoins are fungible tokens designed to track external reference assets, most commonly fiat currencies. They can be classified into decentralized designs, which rely on on-chain collateral and algorithms, and centralized designs, which are backed by off-chain reserves and governed by a centralized issuer. As of February 2026, centralized stablecoins dominate the market, with USDC and USDT jointly exceeding \$255 billion in total market capitalization.

\subsection{Regulation and Sanctions}

Blockchain systems enable permissionless value transfer, which complicates the enforcement of traditional financial regulations. Unlike decentralized protocols, centralized stablecoin issuers are identifiable legal entities and are therefore subject to regulatory mandates, including financial sanctions imposed by authorities such as U.S. The Office of Foreign Assets Control (OFAC)~\cite{ofac_treasury}.

To comply with such mandates, centralized stablecoin contracts (e.g., USDC and USDT) expose issuer-controlled administrative functions that enforce sanctions directly on-chain. By invoking privileged operations such as \texttt{blacklist} or \texttt{freeze}, the issuer updates contract state to disable transfers and approvals involving designated addresses, causing subsequent attempts to move funds to deterministically fail.

As a result, regulatory sanctions are realized as on-chain state transitions initiated by the issuer, which must compete for transaction inclusion and ordering with other transactions. This execution model creates observable races between sanction enforcement and attempts to relocate assets before restrictions take effect.

\subsection{The MEV Supply Chain}
\label{ssec:mev}

MEV refers to the profit that can be extracted by blockchain participants, miners in Proof-of-Work (PoW) or validators in Proof-of-Stake (PoS), by virtue of their power to arbitrarily order, include, or censor transactions within a block.

The public mempool, where pending transactions are broadcast, serves as the arena for MEV extraction. In modern PoS architectures like Ethereum's Proposer-Builder Separation (PBS), this process has evolved into a sophisticated, multi-role supply chain~\cite{yang2025decentralization, zhou2021high, qin2023blockchain, zhou2023sok}:

\begin{itemize}
    \item \textbf{Searchers:} These are independent actors who continuously monitor the mempool for profitable opportunities (e.g., arbitrage, liquidations, or in our context, conflicting sanction/evasion transactions). They express their desired transaction order and execution as "bundles."
    
    \item \textbf{Builders:} These specialized, trusted entities aggregate bundles from multiple searchers and construct the most profitable full block possible. They are responsible for the complex task of optimal block construction.
    
    \item \textbf{Validators (Proposers):} In the PBS model, validators delegate the block construction task. They simply select the block header from the builder that offers them the highest "bribe" or payment, then propose that block to the network.
\end{itemize}

This "bribe" is the technical mechanism by which actors compete for block space and execution priority. This competition manifests in several key forms: (1) a public \textit{priority gas auction (PGA)}, where actors compete using high transaction fees in the public mempool; (2) a private, \textit{bundle-based bidding war}, where searchers submit bundles with an explicit payment directly to builders (e.g., via services like Flashbots~\footnote{https://www.flashbots.net/}); and (3) \textit{direct proposer payments}, where an actor transfers ETH directly to the block producer's fee-recipient address via a standalone on-chain transaction, outside the standard gas-fee or bundle-tip mechanism. True off-chain side payments are unobservable on-chain and therefore out of scope.
A sanctioned entity attempting to relocate funds before enforcement and an issuer attempting to enforce sanctions are thus engaged in this multi-faceted bidding war. The entire MEV supply chain is composed of rational, profit-maximizing actors who will include the transaction or bundle that yields the highest profit, regardless of the underlying semantics (i.e., sanction vs. evasion).

\section{Threat Model}
\label{sec:threat_model}

In this section, we describe the threat model for on-chain sanction enforcement. We specify the scope of systems under consideration, define the strategic roles involved in sanction enforcement and evasion, and clarify the observability limits of our analysis.
\begin{figure*}[htb]
    \centering
    \includegraphics[width=0.8\linewidth]{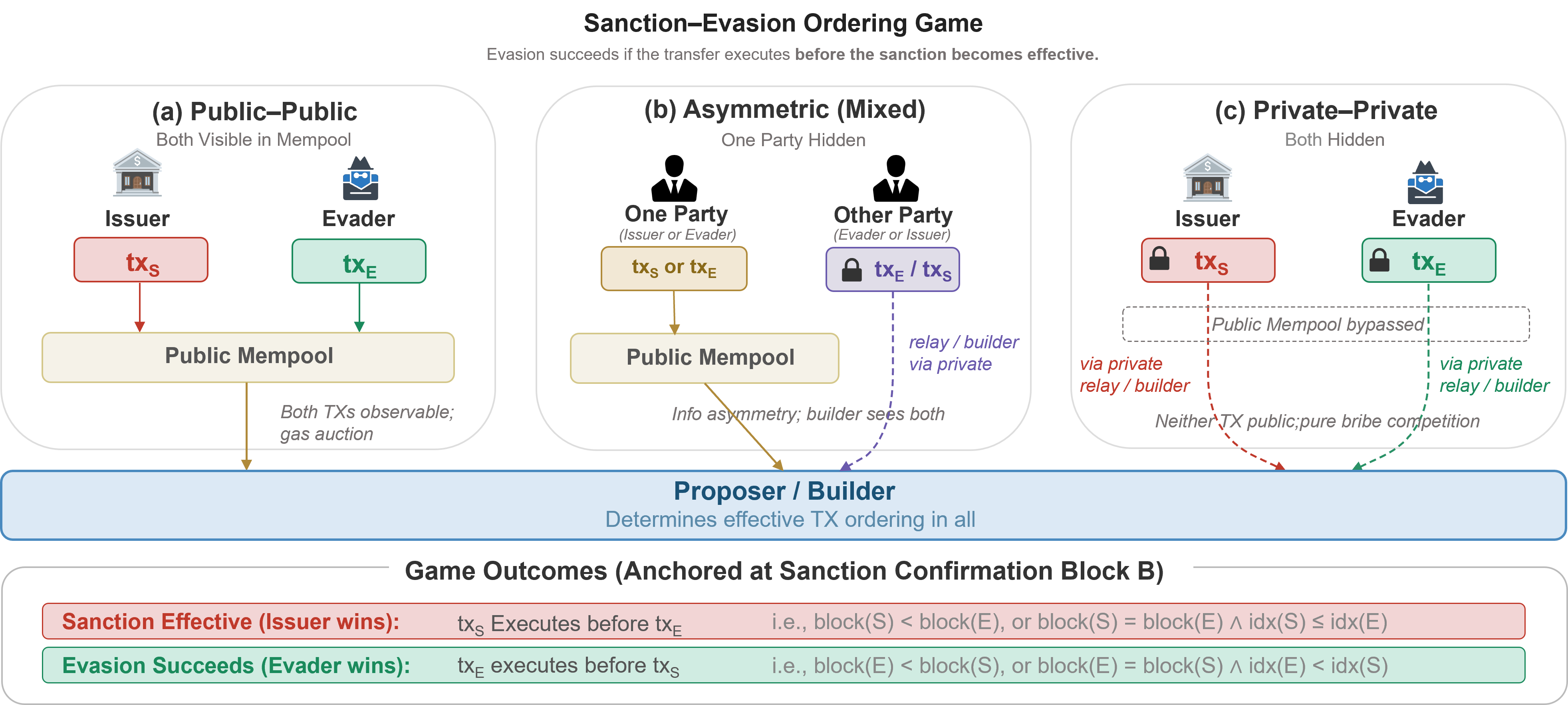}
    \caption{Reactive sanction races escalate into ordering competition across public and private regimes.}
    \label{fig:game}
\end{figure*}

\subsection{Scope of Study}

Our threat model focuses on sanction enforcement in centralized, fiat-collateralized stablecoins, where issuers retain privileged on-chain control. In particular, we consider USDC and USDT, which together account for over 80\% of the global stablecoin market capitalization according to DeFiLlama~\cite{DefillamaStablecoins}. As a result, sanction events involving these assets constitute the dominant form of real-world on-chain stablecoin enforcement.
We focus our analysis on Ethereum, where both issuers maintain their largest circulating supplies, approximately \$48B of USDC and \$99B of USDT as of February 2026~\cite{circle,tether}. Ethereum therefore represents the most significant and adversarially relevant setting for on-chain sanction enforcement and evasion.

\subsection{System Roles}
\label{sec:roles}

We consider three strategic roles in the on-chain sanction process.

\noindent \textbf{Issuer.}
The issuer is the centralized operator of a privileged stablecoin contract responsible for enforcing sanctions. The issuer’s objective is to prevent sanctioned accounts from transferring assets once a sanction decision is made, while minimizing unintended disruption to other users. To this end, the issuer can submit on-chain transactions that update the state of the stablecoin contract, freezing designated addresses and disabling further transfers involving those accounts. In practice, sanction enforcement may be carried out through a single or a sequence of transactions, submitted via either public mempools or private channels.

\noindent \textbf{Evader.}
The evader is a sanctioned entity that controls accounts holding the issuer’s stablecoin balance and seeks to retain access to these assets despite impending enforcement. The evader’s objective is to move funds out of sanction-targeted accounts before the issuer’s on-chain sanction update takes effect. To this end, the evader attempts to submit on-chain transactions that transfer or otherwise relocate assets prior to enforcement, competing for timely inclusion and favorable ordering through either public or private submission channels.
\emph{Note that a sanctioned entity is treated as an evader only when sanction evasion intent can be attributed to it with observable on-chain evidence.} Accordingly, we do not consider shared public infrastructure where individual intent cannot be isolated (e.g., \texttt{Tornado Cash} core contracts), nor large compliant exchange deposit addresses that primarily act as aggregation sinks, as evaders. 

\noindent \textbf{Block Producer.}
The block producer is the entity responsible for constructing blocks, with discretion over transaction inclusion and ordering. Its objective is to maximize economic returns from block production, and it can prioritize, reorder, or exclude transactions when assembling a block. As a result, both issuer and evader transactions are ultimately subject to the block producer’s economically motivated decisions.

Note that all roles are subject to the same network conditions, such as propagation delays and network congestion, and neither issuer nor evader directly controls transaction inclusion or ordering.

Figure~\ref{fig:game} summarizes the resulting ordering game under three visibility regimes. In the simplest case (a), both the sanction transaction and the evasion transaction are broadcast to the public mempool, and the contest reduces to an observable gas auction. When one party routes its transaction through a private channel (b), the game becomes asymmetric: the builder observes both transactions but the public mempool reveals only one. In the limiting case (c), both parties bypass the public mempool entirely, and the contest is resolved purely through private bids to builders and proposers.

\subsection{Observability Boundary}
\label{sec:observability}

Our analysis is grounded in a clear distinction between information that is directly observable on-chain, information that is only partially observable, and aspects that are inherently unobservable within the blockchain system.

\noindent\textbf{Directly observable (on-chain).}
We can directly observe on-chain data, including transactions data and their execution traces, token transfers and event logs, block heights and timestamps. These signals allow us to precisely identify when sanctions become effective on-chain and to measure realized asset movements occurring prior to enforcement in a reproducible manner.

\noindent\textbf{Partially observable (auxiliary).}
Some signals relevant to sanction enforcement are only partially observable and cannot be treated as definitive evidence, yet they provide useful context. Due to the incomplete and vantage-point–dependent nature of mempool access, we cannot precisely observe transaction submission times. Entity identities inferred from third-party address labels may be inaccurate or outdated, and the use of private transaction services is largely invisible on-chain. Accordingly, we treat such signals as supporting evidence, rather than as ground truth.

\noindent\textbf{Unobservable (off-chain).}
Certain aspects of the sanction process are not publicly observable. In particular, we cannot directly observe off-chain decision timelines, such as when an issuer internally decides to initiate a sanction, nor private off-chain arrangements, such as bilateral agreements or side payments between actors. Our analysis does not rely on access to such information and is limited to what can be established from observable evidence.

\section{Methodology: Dataset Construction}
\label{sec:method}

\subsection{Phase~0: Raw On-Chain Data Collection}
\label{sec:phase0}

Phase~0 collects raw on-chain data related to sanction enforcement, consisting of sanctioned addresses and their stablecoin transfer activity. These data form the raw dataset used by all subsequent phases of our pipeline.

\noindent\textbf{Time span and data sources.}
We collect data from 2017-11-28 12:41:21 UTC (the USDT contract creation time) to 2025-08-31 23:59:59 UTC using Google’s public BigQuery Ethereum dataset and public archive nodes.

\begin{figure*}[thb]
    \centering
    \includegraphics[width=0.75\linewidth]{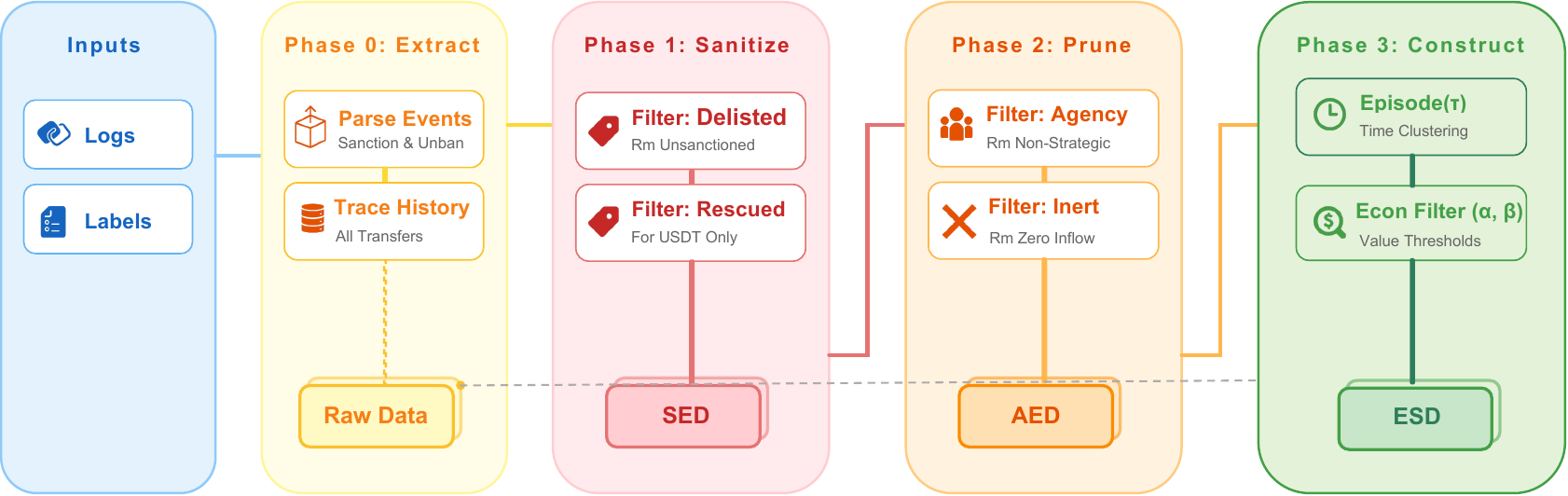}
    \caption{Overview of the dataset-construction pipeline}
    \label{fig:pipeline}
\end{figure*}
\noindent\textbf{Sanction event extraction.}
We scan execution traces and event logs of the USDT and USDC contracts to extract all sanction related operations, including blacklist updates and, for USDT, the \texttt{destroyBlackFunds} operation. For each operation, we record the affected address, token, block number, and transaction identifier.

\noindent\textbf{Raw sanctioned address set.}
From sanction events, we derive a raw set of sanctioned addresses, without applying any filtering.

\noindent\textbf{Stablecoin transfer and approval histories.}
For each sanctioned address, we retrieve the complete on-chain history of the corresponding stablecoin, including all incoming and outgoing transfers, approval events, and reverted transfer attempts. This yields a raw activity dataset capturing all token movements and delegated spending authorizations involving sanctioned addresses over time.

\subsection{Phase~I: Sanction Semantic Filtering}
\label{sec:phase1}

Since on-chain sanction actions do not always reflect true sanction intent, Phase~I applies semantic filtering based on surrounding on-chain activity to remove addresses that are not representative of actual sanction enforcement, yielding a cleaned set of sanctioned entities for downstream analysis.

\noindent\textbf{Revoked sanctions.}
We remove addresses whose sanctions are later lifted through explicit unblacklist operations. 
Such cases typically correspond to temporary testing actions, false positives, or compliance remediation, and do not constitute persistent sanction enforcement suitable for evasion analysis.

\noindent\textbf{Recovery-motivated sanctions.}
For USDT, we remove sanctions that are triggered as part of issuer-driven fund recovery rather than enforcement. In particular, when tokens are accidentally sent to irrecoverable or unintended addresses (e.g., \texttt{0x0}), Tether may blacklist the address to prevent permanent loss, destroy the frozen balance, and subsequently re-issue the funds to a treasury or user-controlled address (e.g., a freeze $\rightarrow$ destroy $\rightarrow$ re-issuance sequence). Such sanctions are administrative in nature and are initiated to restore funds, not to restrict adversarial behavior. We therefore exclude related addresses from the evasion dataset.

After applying the above semantic filters, we retain a set of sanctioned addresses whose blacklist actions are more likely to reflect genuine sanction enforcement. This filtered address set forms the \emph{Sanctioned Entity Dataset (SED)} used in subsequent phases.

\subsection{Phase~II: Adversarial Actor Filtering}
\label{sec:phase2}

While Phase~I isolates sanction actions that are more likely to reflect genuine enforcement, the resulting SED still contains many sanctioned addresses that are not modeled as \emph{evaders} under our threat model (\S\ref{sec:roles}).
Phase~II therefore filters the SED to remove such cases, retaining only sanctioned entities for which evasion intent and strategic action are observable.

\noindent\textbf{Non-Strategic Intermediaries.}
We exclude sanctioned addresses that correspond to shared public infrastructure or compliant aggregation sinks, such as mixer core contracts, generic routing contracts, and large compliant exchange \emph{deposit aggregation clusters}. These addresses may appear in sanctioned fund flows and therefore reflect the outcomes of evasion or enforcement actions. However, they are not the entities that initiate or decide such actions. Because evasion intent cannot be attributed to these endpoints as independent actors, we do not treat them as adversarial evaders in our analysis.

\noindent\textbf{Inert addresses.}
We further exclude sanctioned addresses that never hold any stablecoin balance, i.e., addresses with no observed stablecoin inflow. Such addresses have no assets to relocate and therefore cannot exhibit sanction evasion behavior. 

After applying the above filters, we obtain a refined set of sanctioned entities that both control stablecoin assets and exhibit observable capacity for evasion. This filtered population forms the \emph{Adversarial Evasion Dataset (AED)}, which serves as the primary input for the final phase.

\subsection{Phase~III: Evasion Episode Identification}
\label{sec:phase3}

After identifying adversarial actors, the remaining challenge is to define what constitutes concrete \emph{sanction evasion} behavior. On-chain data is recorded as discrete transactions, whereas evasion corresponds to a sustained behavioral intent to relocate assets prior to sanction enforcement. Labeling individual transfers as evasion would therefore be arbitrary and fragile: evasion may unfold across multiple transactions, be interleaved with routine usage, or occur through contract-mediated execution paths.

Phase~III addresses this ambiguity through \emph{behavioral abstraction}. We reconstruct raw transaction streams of adversarial entities into temporally coherent \emph{intent-level evasion episodes}, each representing a contiguous period of activity plausibly driven by a single evasion attempt. These episodes serve as the fundamental units for identifying and analyzing sanction--evasion behavior.

\noindent\textbf{Intent episode construction.}
For each address in the AED, we first order all of its transactions by block timestamp and compute the time gaps between consecutive transactions. We then aggregate these inter-transaction gaps across all addresses and estimate their distribution using kernel density estimation (KDE). 
We define a threshold $\tau$ at the first prominent valley of this distribution, which corresponds to the transition between short gaps associated with temporally related actions and longer gaps indicative of unrelated activity. Using $\tau$, we partition each address’s transaction history into disjoint, time-ordered \emph{intent episodes}, where consecutive transactions within an episode are separated by less than $\tau$.

\noindent\textbf{Episode-level economic filtering.}
Intent episodes vary widely in economic significance. To distinguish plausible sanction evasion from routine usage or dust-level activity, we jointly quantify each episode’s economic scale and apply a conservative materiality criterion grounded in regulatory practice and economic rationality.

For an intent episode $S$, We define the total available liquidity during an episode as
$L_{\text{episode}} = B_{\text{start}} + \sum_{\text{tx} \in \text{episode}} \text{Inflow}(tx)$,
where $B_{\text{start}}$ is the stablecoin balance immediately before the episode begins, and $\text{Inflow}(tx)$ denotes the stablecoin amount received by the address in transaction $tx$ during the episode. This quantity captures the maximum amount of stablecoin the entity could plausibly relocate within the episode.

Let $V_{\text{out}}$ denote the total stablecoin outflow within the episode. We summarize the episode’s asset relocation using the outflow ratio $V_{\text{out}} / L_{\text{episode}}$.
We classify $S$ as an evasion episode only if it satisfies
$(V_{\text{out}} \ge \beta)\ \land\ \left( \frac{V_{\text{out}}}{L_{\text{episode}}} \ge \alpha \right)$.
The absolute threshold $\beta$ is set to \$1{,}000, reflecting widely adopted international AML standards (e.g., the FATF Travel Rule). The relative threshold $\alpha$ captures the fraction of available liquidity relocated. We evaluate the robustness of our results under different choices of $\alpha$ in later analysis.

Applying this procedure yields an \emph{evasion episodes} dataset, each representing a temporally coherent and economically material attempt to relocate assets in anticipation of sanction enforcement. This dataset forms the basis for our empirical analysis of sanction--evasion races.

\subsection{Auxiliary Data Collection}

In addition to the datasets produced by the above pipeline, we collect a set of auxiliary data that are used where needed throughout the pipeline and in subsequent empirical analyses.

\noindent\textbf{Governance forensics.}
Because sanction enforcement is realized through issuer-controlled privileged functions, execution logs alone may miss failed or reverted enforcement attempts. To capture the full set of issuer-initiated sanction activity, we perform targeted governance forensics to reconstruct sanction-related actions initiated by authorized controllers of USDT and USDC. For USDT, this involves tracing blacklist operations initiated by multisig owners. For USDC, which employs an upgradeable proxy architecture, we follow role changes over time and track blacklist actions issued by authorized roles \texttt{blacklister}. This reconstruction allows us to observe sanction \emph{attempts} rather than only successful executions, providing a more complete view of issuer enforcement behavior.

\Paragraph{USDT multisig timestamp reconstruction.}
For USDT, we further distinguish between the commitment time
and the effective time in the multisignature enforcement workflow.
Specifically, we record the timestamp at which sanction intent is first
confirmed on-chain via \texttt{Submission}, denoted
$t_{\text{submit}}$, and the later timestamp at which the blacklist
actually becomes effective via \texttt{Execution}, denoted
$t_{\text{exec}}$.
Throughout the paper, our main sanction-aware evasion labels,
the $\Delta$ statistics, and the race / tactical-reactive / strategic-migration
regimes remain uniformly anchored at $t_{\text{exec}}$.
By contrast, $t_{\text{submit}}$ is used only as a mechanism-level
interpretive variable to characterize an issuer-side
committed-but-not-yet-effective exposure window, rather than to
redefine evasion itself.

\noindent\textbf{Address attribution.}
To support actor classification and filtering throughout the pipeline, we collect address labels from multiple sources, including Etherscan, Dune labels, and third-party security intelligence providers. We adopt a conservative, hierarchical fusion strategy: when labels are consistent, we retain the attribution; when labels conflict or supporting evidence is insufficient, we assign the address as \emph{unknown}. This approach avoids forced attribution and limits the introduction of systematic bias, while providing auxiliary signals for identifying issuers, intermediaries, and sanctioned entities where attribution is reliable.

\noindent\textbf{Regulatory and legal anchors.}
To contextualize on-chain sanction activity, we collect external regulatory records, including the OFAC SDN list and publicly available court seizure warrants indexed via the CourtListener API. These records serve as temporal and semantic anchors linking on-chain enforcement actions to documented regulatory or legal events. We do not assume that such off-chain records are complete, timely, or perfectly aligned with on-chain execution; rather, they are used as reference points to support interpretation of sanction timing and issuer behavior.

\noindent\textbf{Transaction submission timing.}
To estimate transaction submission timing prior to confirmation, we collect public mempool observations as a conservative lower bound. We aggregate two mempool archives, Blocknative ``Mempool Archive''  (2019-01-11 to 2025-01-03) and the Flashbots ``Mempool Dumpster'' (2023-07-08 to 2025-08-31), and record, for each transaction, the earliest observed \texttt{first-seen} timestamp. 

\noindent\textbf{Private submission signals.}
To identify transactions that may bypass the public mempool, we collect auxiliary signals related to private submission. This includes private-transaction flags from mempool sources and Dune data (Flashbots Protect and MEV-Share).

\noindent\textbf{Fee Baselines.}
To isolate sanction--evasion bidding behavior from ambient network congestion, we compute block-level fee baselines for comparison. For each block that contains a blacklist transaction or a candidate evasion episode, we calculate the median of effective transaction fees in that block. These baselines characterize the contemporaneous fee environment faced by all transactions in the same block, allowing us to assess whether issuer or evader transactions pay fees that are unusually high relative to background conditions, while remaining robust to outliers.

\subsection{Limitations}

Our measurements should be interpreted as \emph{conservative lower bounds}. Several sources of incompleteness and limited observability constrain what can be inferred from the available data.

\noindent\textbf{Mempool observability.}
Mempool visibility is vantage-point dependent and inherently incomplete. First-seen timestamps provide only lower bounds on transaction submission time, and historical coverage is limited to available archives.

\noindent\textbf{Address attribution.}
Address labels may be noisy or stale despite cross-source validation. 

\noindent\textbf{Regulatory records.}
Judicial and regulatory data may be delayed, partial, or unavailable (e.g., sealed cases), and non-U.S.\ sanctions are outside our primary scope.

\noindent\textbf{Private transaction channels.}
Coverage of private routing (e.g., Flashbots) is incomplete. Private-transaction signals are treated as indicative evidence, not exhaustive measurements.

\noindent\textbf{Off-chain coordination.}
We observe standard transaction fees and direct proposer payments (standalone ETH transfers to fee-recipient addresses) on-chain.  Purely off-chain side payments are unobservable and not quantified.

Overall, these limitations bias our results toward underestimating the prevalence, intensity, and sophistication of sanction--evasion behavior rather than overstating it.

\section{Empirical Study of On-chain Sanction}
\label{sec:empirical}
\subsection{The Landscape of On-Chain Sanctions}
\label{sec:landscape}

We begin by characterizing the empirical landscape of on-chain sanctions after applying the dataset construction
pipeline described in Section~\ref{sec:method}. From raw execution logs, we extract
2{,}418 sanctioned USDT addresses and 341 sanctioned USDC addresses (Phase~0).
After applying Sanction Semantic Filtering (Phase~I) and Adversarial Actor Filtering (Phase~II),
we obtain the final \emph{Adversarial Evasion Dataset (AED)}, comprising sanctioned addresses
that remain consistent with our threat model and retain observable evasion potential.
The AED contains 1{,}990 USDT addresses and 123 USDC addresses. In addition, we identify \$24{,}808{,}527 of USDT that are destroyed via \texttt{destroyBlackFunds}
as part of issuer-driven recovery workflows.
Unless otherwise stated, the analysis in this subsection is address-level, i.e., each observation corresponds to one sanctioned address in the AED.

\begin{figure}[hbpt]
    \centering
    \includegraphics[width=1\linewidth]{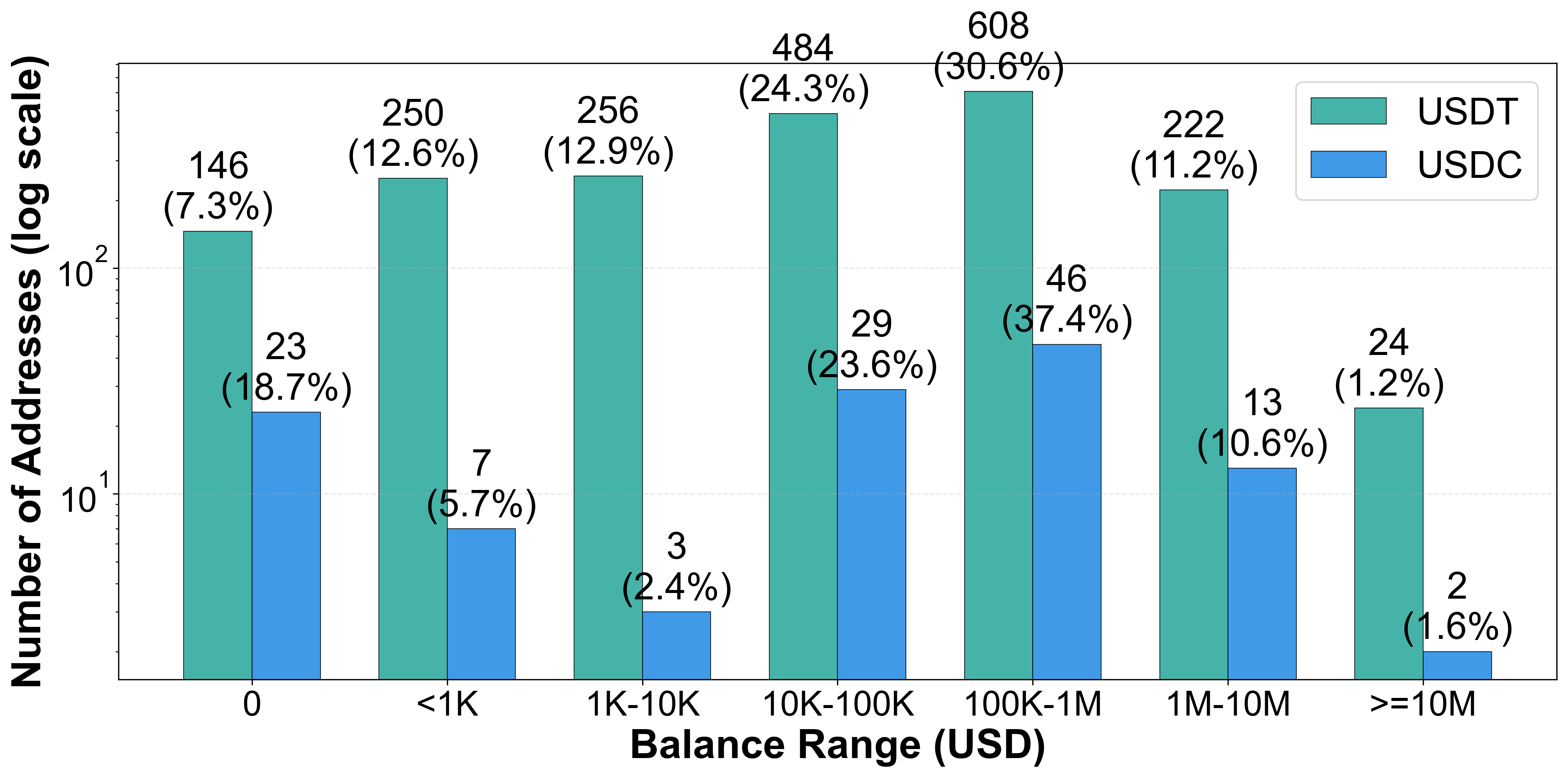}
    \caption{Statistics for Frozen Values}
    \label{fig:frozen_value}
\end{figure}

\begin{table}[tbhp]
\centering
\caption{Filtering of sanctioned addresses into AED. Percentages are relative to the Phase~0 raw set.}
\label{tab:addr_filtering}
\footnotesize
\setlength{\tabcolsep}{4pt}
\begin{tabular}{@{}lcc@{}}
\toprule
\textbf{Phase} & \textbf{\# USDT addr.} & \textbf{\# USDC addr.} \\
\midrule
P0: Raw              & 2418 (100.0\%) & 341 (100.0\%) \\
P1: Filtered         & 352 (14.6\%)   & 102 (29.9\%)  \\
P1: Output (SED)     & 2066 (85.4\%)  & 239 (70.1\%)  \\
P2: Filtered         & 76 (3.1\%)     & 116 (34.0\%)  \\
P2: Output (AED)     & 1990 (82.3\%)  & 123 (36.1\%)  \\
\bottomrule
\end{tabular}
\end{table}

We next characterize the \emph{provenance} of sanctioned addresses in the AED.
Cross-referencing the dataset with the OFAC Specially Designated Nationals (SDN) 
list shows that
only a small fraction of sanctioned addresses can be directly linked to publicly listed OFAC designations,
specifically 26 USDT addresses and 15 USDC addresses.
To further contextualize the remaining cases, we query publicly available litigation records via CourtListener and identify 121 additional USDT addresses and 18 USDC addresses associated with documented legal proceedings.
Beyond publicly disclosed regulatory designations and litigation records, we observe a distinct behavioral pattern among a subset of sanctioned addresses: repeated transfers of zero-value amounts to other addresses.
Such activity is widely recognized as a phishing or address-poisoning technique used to create confusion in transaction histories and lure victims into misdirected transfers.
Using this behavioral signature, we further label 431 USDT addresses and 30 USDC addresses as phishing-like.
The remaining sanctioned addresses cannot be directly attributed to public regulatory disclosures, litigation records, or phishing-like behavior, and are therefore categorized as \emph{unknown}. 

Figure~\ref{fig:sanction_provenance} summarizes the distribution of sanctioned addresses in these categories.

\begin{finding}
In aggregate, over 94\% of on-chain sanctions are initiated through issuer-driven enforcement or compliance actions without a corresponding public OFAC designation.
\end{finding}

From this point onward, we keep the sanctioned-address population fixed and switch only the measurement axis from address counts to frozen value at sanction effectiveness.

We next quantify the \emph{economic impact} of on-chain sanctions by measuring how much stablecoin value is actually frozen at the time of enforcement.
For each address in the AED, we compute its stablecoin balance immediately before the corresponding blacklist transaction becomes effective on-chain.
Aggregating across all sanctioned addresses yields the total amount of value directly frozen by issuer enforcement actions.
Figure~\ref{fig:frozen_value} reports the distribution of frozen balances across sanctioned addresses.
In aggregate, USDT sanctions freeze \$1{,}411{,}842{,}932.22, while USDC sanctions freeze \$108{,}250{,}166.32.
These figures represent the lower bound of enforceable value, as they only account for balances remaining at the moment sanctions take effect.

\begin{figure}[bhtp]
    \centering
    \includegraphics[width=0.8\linewidth]{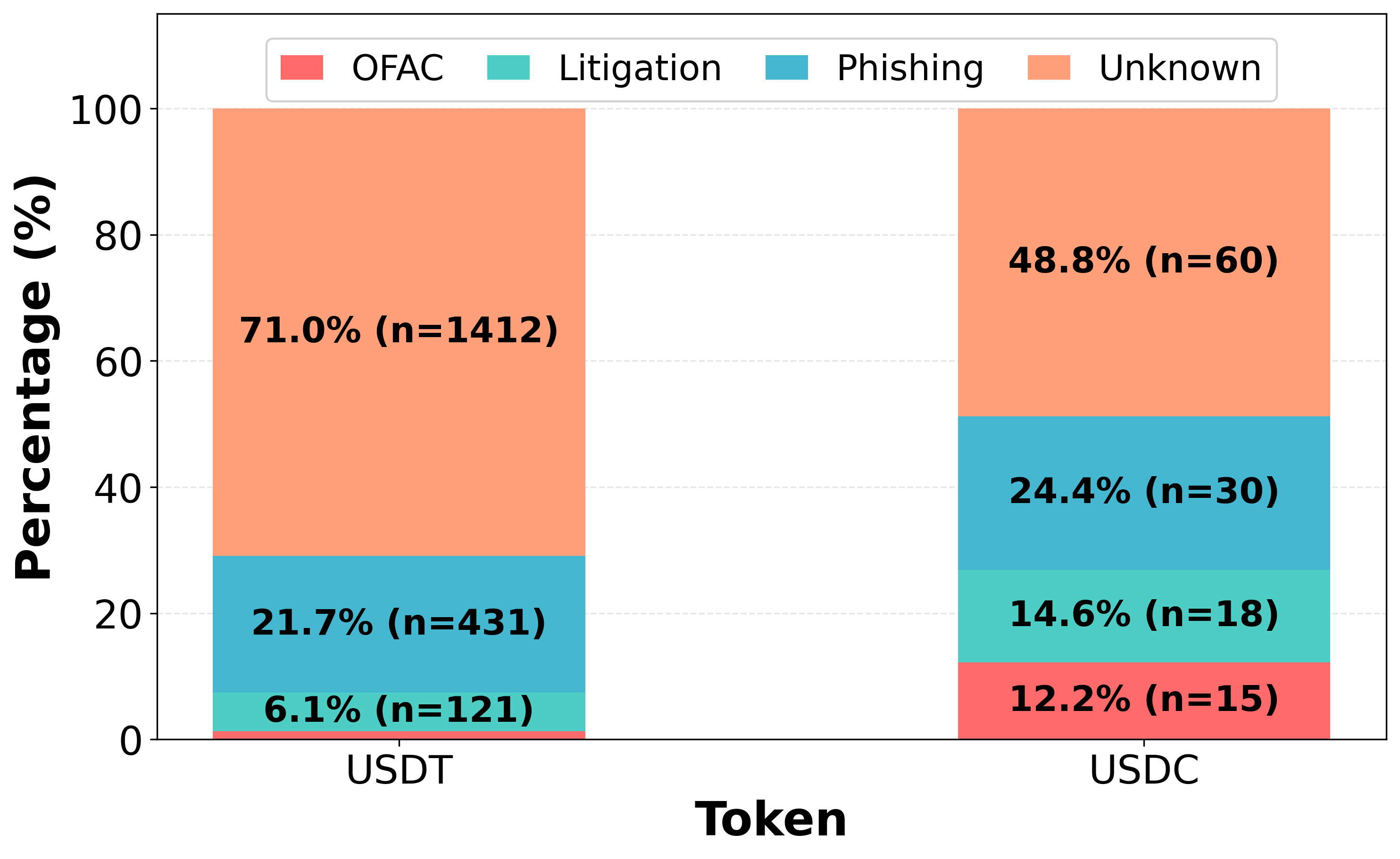}
    \caption{Provenance of Sanctioned Addresses}
    \label{fig:sanction_provenance}
\end{figure}

A closer inspection, however, reveals a notable anomaly.
A non-trivial fraction of sanctioned addresses hold little or no balance at the time of enforcement.
For USDT, 146 addresses (7.34\%) have a zero balance, and an additional 250 addresses (12.56\%) hold less than \$1{,}000.
For USDC, 23 addresses (18.7\%) have a zero balance, and 7 addresses (5.69\%) hold less than \$1{,}000.
Since all addresses in the AED are confirmed to have received stablecoin inflows prior to sanctioning, the presence of zero-balance and low-balance sanctions suggests potential existence of targeted on-chain sanction evasion.

\begin{finding}
Despite over \$1.5B in stablecoins being frozen in aggregate, a substantial fraction of sanctioned addresses hold zero or near-zero balances at the time enforcement takes effect.
\end{finding}

\subsection{Characterizing Evasion Behavior}
\label{sec:evasion_behavior}

The previous section highlighted a distinct anomaly: certain addresses depleted their balances shortly before on-chain sanctions took effect. This phenomenon requires a behavioral explanation. Focusing on AED, we investigate three complementary dimensions: \textit{when} pre-sanction capital flight occurs, \textit{where} the funds flow, and \textit{how} each outflow is executed.

\Paragraph{When: Temporal Structure of Pre-Sanction Outflows.}

To answer ``when'', we measure pre-sanction outflows relative to
$t_{\text{exec}}$, the on-chain time at which blacklist enforcement becomes
effective. A single transaction is too granular as an analysis unit; evasion may span multiple transfers or interleave with daily activities.
Therefore, we aggregate raw outflow transactions into intent-level \textit{evasion episodes} and characterize them by their temporal distance relative to the sanction enforcement.

For each AED, we sort its transactions by $t_{\text{exec}}$ and compute the time intervals between adjacent transactions. By aggregating the interval distributions across all addresses and applying Kernel Density Estimation (KDE), we extract a data-driven time threshold $\tau = 107$ seconds based on the first local minimum of the density curve. Adjacent transactions with intervals smaller than $\tau$ are merged into a single intent episode. To filter out background noise, we impose economic materiality constraints: we require the episode's net outflow volume $V_{\text{out}} \geq \$1{,}000$ and its ratio to the address's total liquid assets 
$V_{\text{out}}/L_{\text{episode}} \geq 10\%$. This yields 18,432  economically meaningful episodes. 
We further evaluate the robustness of this materiality criterion by varying the relative outflow threshold $\alpha$ and report the stability results in Appendix~\ref{app:alpha-robustness}.

The time difference $\Delta$ between the final net outflow of an episode and the sanction enforcement spans from seconds to years. The distribution of $\log(\Delta)$ exhibits significant heavy-tailed and multi-modal characteristics, indicating profound temporal heterogeneity in evasion behaviors. We fit a Gaussian Mixture Model (GMM) to $\log(\Delta)$ and determine the optimal number of components $k=30$ using the Bayesian Information Criterion (BIC) over $k \in [1,50]$. Treating this GMM as a density estimator, we merge adjacent components along low-density valleys to obtain objective cutoff points.

As summarized in Table~\ref{tab:regime}, this yields four regimes: (1) \emph{Race}, $\Delta \leq 242$ seconds; (2) \emph{Tactical reactive}, $242 \text{ s} < \Delta \leq 26.5 \text{ hours}$; (3) \emph{Strategic migration}, $26.5 \text{ hours} < \Delta \leq 88 \text{ days}$; and (4) \emph{Long-tail}, $\Delta > 88 \text{ days}$. We exclude the long-tail regime from subsequent sanction-aware analysis due to its weak causal coupling with the sanction event.

Three phenomena stand out. First, the Race regime is exceedingly rare: we identify only 9 episodes for USDT and none for USDC. The absence of
USDC races is unsurprising given its smaller sanction footprint in our dataset, but the overall scarcity of race-window episodes suggests that
most evasion does not occur as last-moment ordering contests; rather,
funds typically move earlier, when targets have already perceived sanction risk. Second, despite their rarity, Race episodes are high-value events: their median outflow is $\$3.83\text{M}$, about $50\times$ larger than the $\$64\text{K}$--$\$70\text{K}$ medians in the tactical-reactive and strategic-migration regimes. This indicates that last-moment ordering contests mainly arise when substantial balances are at stake, whereas
earlier evasion regimes consist of more diffuse and lower-value flows.
Third, strategic-migration serves as the primary conduit for capital flight: in USDT alone, it comprises 6,232 episodes across 517 unique addresses, accounting for $\$1.5\text{B}$ in pre-sanction outflows, dwarfing the tactical-reactive regime in magnitude.

\begin{table*}[htpb]
\centering
\caption{Evasion episode statistics by temporal regime and 
token. Regime boundaries are derived from the GMM-based 
segmentation of $\log(\Delta)$. Human-readable durations: 
$242\,\text{s} \approx 4\,\text{min}$, 
$95{,}514\,\text{s} \approx 26.5\,\text{h}$, 
$7{,}614{,}341\,\text{s} \approx 88\,\text{d}$. 
Outflow values are in USD.}
\label{tab:regime}
\small
\begin{tabular}{llrrrrrr}
\toprule
\textbf{Regime} & \textbf{Token} & \textbf{\# Ep.} 
  & \textbf{\# Addr.} & \textbf{Total Outflow (\$)} 
  & \textbf{Median Outflow (\$)} 
  & \textbf{Median $\Delta$ (s)} 
  & \textbf{Min $\Delta$ (s)} \\
\midrule
\multirow{2}{*}{Race ($\Delta \leq 242\,\text{s}$)}
  & USDT & 9 & 9 & 25,721,865 & 3,827,942 & 72 & 24 \\
  & USDC & 0 & 0 & --- & --- & --- & --- \\
\midrule
\multirow{2}{*}{\shortstack[l]{Tactical reactive\\($242\,\text{s} < \Delta \leq 26.5\,\text{h}$)}}
  & USDT & 494 & 222 & 150,867,282 & 66,514 & 38,751 & 276 \\
  & USDC & 10 & 7 & 30,246,015 & 1,054,288 & 39,852 & 17,256 \\
\midrule
\multirow{2}{*}{\shortstack[l]{Strategic migration\\($26.5\,\text{h} < \Delta \leq 88\,\text{d}$)}}
  & USDT & 6,232 & 517 & 1,551,578,583 & 69,879 & 2,431,194 & 95,688 \\
  & USDC & 99 & 37 & 76,111,317 & 33,066 & 2,529,168 & 114,000 \\
\midrule
\multirow{2}{*}{Long-tail ($\Delta > 88\,\text{d}$)}
  & USDT & 11,489 & 351 & 2,015,455,042 & 64,000 & 35,939,484 & 7,616,620 \\
  & USDC & 99 & 29 & 13,123,274 & 30,535 & 43,469,848 & 7,856,556 \\
\bottomrule
\end{tabular}
\end{table*}

\begin{finding}
Evasion fund transfers unfold across distinct temporal scales: short-term
``Race'' episodes are rare but carry a substantial monetary premium,
whereas multi-day to multi-month ``Strategic migration'' accounts for the
vast majority of pre-sanction capital flight.
\end{finding}

\Paragraph{Where: Destinations of Evaded Funds.}
We next trace the destinations of the evaded capital. Aggregating the receivers of all AED outgoing transfers yields 6,419 unique receiver addresses, absorbing a total of $\$4.05\text{B}$. Using multi-source labels, we classify these destinations into four functional buckets:
compliant CEXs, DeFi infrastructure (DEXs/Aggregators/Bridges),
Obfuscation (Mixers \& No-KYC Swaps), and High-Risk/Sanctioned
Exchanges, with \textit{Others} covering unlabeled or non-attributable
destinations outside these categories (see Table~\ref{tab:destinations}).

\begin{table}[t]
\centering
\caption{Destinations of evaded funds.}
\resizebox{1\linewidth}{!}{
\label{tab:destinations}
\footnotesize
\setlength{\tabcolsep}{6pt}
\renewcommand{\arraystretch}{1.1}
\begin{tabular}{@{}lrrrr@{}}
\toprule
\textbf{Bucket} & \textbf{\#Addr} & \textbf{\%addr} & \textbf{Volume} & \textbf{\%vol}  \\
\midrule
Compliant CEX                    & $1{,}137$ & $17.71$ & $\$309.4$M     & $7.63$ \\
DeFi (DEX / Aggregator / Bridge) & $56$      & $0.87$  & $\$112.6$M     & $2.78$  \\
Obfuscation (Mixers and No-KYC swaps)           & $51$      & $0.79$  & $\$40.2$M      & $0.99$   \\
High-Risk / Sanctioned Exchange  & $120$     & $1.87$  & $\$15.1$M      & $0.37$   \\
Other                & $5{,}055$ & $78.75$ & $\$3{,}572.8$M & $88.23$ \\
\midrule
\textbf{Total}                   & $\mathbf{6{,}419}$ & $\mathbf{100.0}$ & $\mathbf{\$4{,}050.1}$M & $\mathbf{100.0}$ \\
\bottomrule
\end{tabular}
}
\end{table}

Among attributable entities, compliant centralized off-ramps dominate. Compliant CEX deposit addresses constitute 17.7\% of the unique receivers, absorbing $\$309\text{M}$ (with Binance, Crypto.com, and Coinbase as the top three counterparties). The use of DeFi as an off-ramp is an order of magnitude smaller: 56 addresses and $\$112.6\text{M}$, primarily distributed across Tokenlon, 1inch, and Uniswap. Notably, obfuscation tools are used in a highly polarized manner. Mixers and No-KYC swaps collectively account for only $\$40.2\text{M}$ (0.99\% of the volume). Furthermore, this volume is not evenly dispersed but heavily concentrated in a few services: MaskEx ($\$35.2\text{M}$ via 16 addresses) and Tornado Cash ($\$0.24\text{M}$ via 1 address).

\begin{finding}
Evaders heavily prioritize liquidity over anonymity. Compliant CEXs are the primary destinations, whereas privacy-enhancing obfuscation services handle less than 1\% of the volume and are highly concentrated among a few providers.
\end{finding}

\Paragraph{How: Execution Mechanisms of Evasion.}
Finally, we investigate the true initiators of the outgoing transfers. After excluding the outflows from  smart contracts, we find that 98.6\% of evasion transactions are directly self-signed by the sanctioned EOAs. The remaining 1.4\% (276 transactions) reveal a delegated execution pattern driven by external callers.

Among these delegated executions, 33 unique external callers each initiated at least two transfers on behalf of sanctioned addresses. In 274 of the 276 transactions, the sanctioned entity had previously
approved a mainstream DeFi infrastructure contract as the spender, and the evasive transfer was later executed through that approved contract.
This DeFi-based delegated execution is highly concentrated, with the top four infra (1inch v5, Permit2, Tokenlon, and MetaMask Swap) accounting for $\sim$90\% of such cases.
Only 2 transactions ($\$173.9\text{K}$) used EOA-spender approvals, where the sanctioned entity had previously approved an EOA as the spender, which then invoked \texttt{transferFrom} directly to move the funds.

\begin{finding}
A delegated execution pattern exists in pre-sanction evasion, and such coordinated capital extractions rely almost entirely ($>$99\%) on leveraging pre-existing permissions granted to mainstream DeFi infrastructures.
\end{finding}

\subsection{SE-MEV: Arms Races over Ordering Power}
\label{sec:semev}

A contract-level \texttt{freeze} right gives issuers authority over sanctioned funds, but not priority in the execution pipeline. Because freeze transactions and evading transfers compete for the same ordering resources, enforcement becomes an ordering contest rather than a purely contract-level action. We define \emph{Sanction-Evasion MEV (SE-MEV)} as the extractable value generated by this contest, and study how ordering power is priced, contested, and captured across increasingly privileged execution channels.

\noindent \textbf{Level 1: Gas War.}
At the first level, freeze and evasion transactions compete in the same public ordering channel, as shown in Figure~\ref{fig:game}(a). Execution priority is purchased through transaction fees, and enforcement can fail either when an evader outbids the issuer or when the issuer's freeze transaction is mispriced or under-gassed, causing it to revert or arrive too late. By examining issuer sanction transactions, we find that eight USDT freeze transactions reverted due to out-of-gas failures between October 2021 and February 2024. These failures represent a basic form of Level-1 breakdown: before enforcement can win an ordering race, the freeze transaction must first be correctly priced and sufficiently provisioned to execute successfully. \footnote{USDT's multisignature workflow introduces a commit-before-execute window, which enlarges the opportunity for evasion. This window covers 116 material episodes totaling \$73.7M, with a median delay of 5.3 hours.}

\begin{figure}[hbpt]
    \centering
    \includegraphics[width=0.8\linewidth]{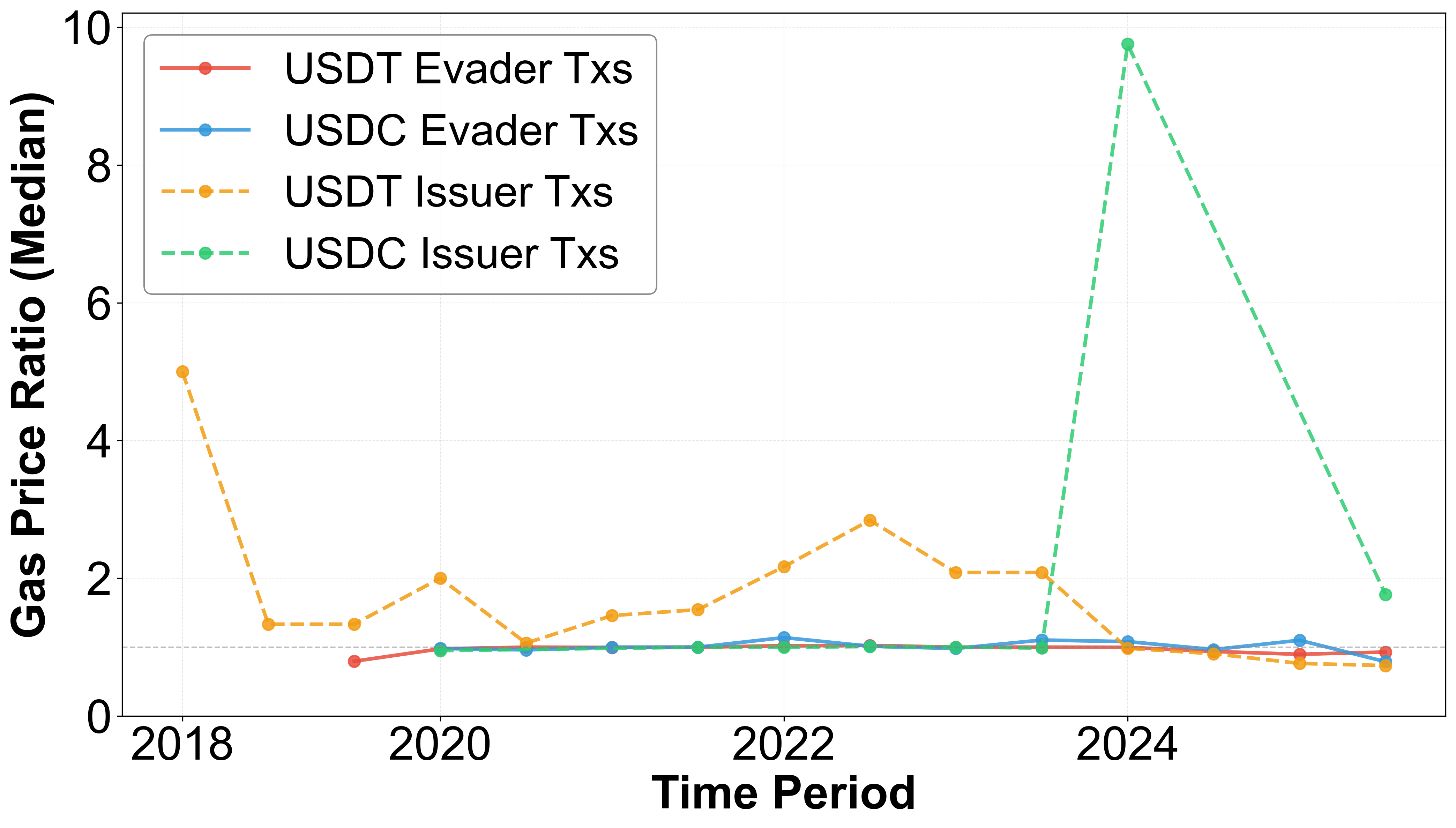}
    \caption{Trend of Gas Price Ratio of Sanction and Evasion}
    \label{fig:gas_war}
\end{figure}

We measure each sanction and evasion transaction's fee aggressiveness by dividing its gas price by the median gas price of the block in which it was included, thereby controlling for background network congestion. Fig.~\ref{fig:gas_war} shows a two-track pattern: USDC issuers bid far above the median throughout the sample, while USDT issuers cluster near or below it early on, then rise sharply from H2 2023 and converge toward the evader-side premium level.

We focus on the race and tactical-reactive windows ($\Delta \leq 26.5$h; 513 episodes, 657 txs), where the ordering competition is most competitive. In short races ($\Delta \leq 1200$s), fee premium increases with episode stake $L_{\mathrm{episode}}$ ($\rho=0.49$, $p=0.0017$), indicating that higher-value episodes induce more aggressive gas bidding. This correlation weakens over the full window, suggesting that longer-horizon evasion relies less on visible fee competition and more on pre-arranged execution paths. We therefore treat these fee-market patterns as evidence consistent with the SE-MEV mechanism, rather than as a stand-alone causal identification (Tab.~\ref{tab:fee-stake}; App.~\ref{app:fee-stake}).

Block-level fee premia capture the final inclusion price, but not the bidding process that occurs before inclusion. Using historical mempool data, we recover five replace-by-fee (RBF) chains, where transactions with the same nonce are repeatedly rebroadcast with higher fees: two issuer-side chains (\(\times 7.5\) within 10\,min in Jul.\ 2020; \(\times 1.36\) over \(\sim\)95\,min in Feb.\ 2021) and three evader-side chains (\(\times 1.17\)--\(\times 2.49\), median \(\times 1.56\); 40\,s--24\,min).

We also observe ordering inversion despite earlier mempool entry. For sanctioned address \texttt{0xe86efa}, the issuer's legacy freeze transaction (\texttt{0xb8521b...6b49}, 50 gwei) entered the mempool 22 min before the evader's EIP-1559 transaction (\texttt{0xff73b7...3d63}, priority fee 179.51 gwei). The evader transaction was included within 10\,s in block 18536174 and moved 100K USDT, while the issuer's legacy transaction, unable to adjust to the base fee, stayed pending for 3.18\,h. This is a representative SE-MEV case in which earlier visibility did not translate into execution priority.

\begin{finding}
     Level 1 of SE-MEV is a fee-based ordering race. Both issuers and evaders pay measurable fee premiums; in short-horizon races, larger stakes induce more aggressive gas bidding, while mispriced or under-gassed freeze transactions can fail before enforcement takes effect.
\end{finding}

Moreover, a substantial fraction of race-window key transactions have no reconstructible mempool record, consistent with private submission. The competition is moving into an information stage.

\Paragraph{Level 2: Visibility Asymmetry.}
Level 2 concerns the choice of submission channel. Private channels, such as Flashbots, MEV-Share, and private RPCs, let transactions bypass mempool-wide visibility and route directly to builders. For issuers, private submission reduces leakage of freeze intent before enforcement takes effect; for evaders, it hides the evading transfer from issuer-side monitoring until inclusion. The contest therefore shifts from paying for priority in the visible fee market to controlling who can observe the transaction before execution (Fig.~\ref{fig:game}(b,c)).

USDT evaders adopt private channels from Oct.\ 2023 (116 cases), while USDT issuers follow only in Jan.\ 2025, about 15 months later, and surge in 2025-Q3 (74 cases). USDC issuers begin in Dec.\ 2024 at a smaller, parallel scale. This timing gap suggests an asymmetric perception of visibility risk: evaders internalize the cost of mempool exposure earlier, whereas issuers migrate only after sustained losses in visible ordering competition.

Channel migration, however, does not eliminate ordering costs. We normalize evader private and non-private transactions by the median gas price of their inclusion block and compare their effective fee premiums. The medians are nearly identical (1.04$\times$ vs.\ 0.99$\times$), but the upper tails diverge sharply: P75 reaches 1.55$\times$ for private transactions versus 1.09$\times$ for non-private transactions, and P90 reaches 3.07$\times$ versus 1.34$\times$. Thus, private submission does not remove the ordering tax; it shifts the payment surface from visible fee competition toward builder-side ordering payments. Full quantile and tail-threshold statistics are reported in App.~\ref{app:private_tx}. Decomposing evader private flow further, we find that only $\sim$50\% of private evader transactions are direct USDT \texttt{transfer} calls; the rest are routed through DeFi resolvers, primarily 1inch.

\begin{finding}
Level 2 of SE-MEV takes the form of submission-channel migration. Both sides move toward private submission. However, private channel do not reduce the tax extracted by infrastructure.

\end{finding}

\Paragraph{Level 3: Delegation to Specialized MEV Infrastructure.}
Level 3 turns ordering competition into oblivious delegation. Evaders wrap evasion as ordinary DeFi intent and let resolvers execute it: cost falls for the evader, while infrastructure, optimizing execution quality and MEV capture in a sanction-agnostic way, is effectively weaponized when its objective happens to align with evasion. Co-option here is structural, not intentional: a system- rather than actor-level phenomenon.

This level appears only evader-side. We identify 27 evasion txs routed via resolvers (e.g., \emph{1inch Rizzolver}, \emph{Seawise: Resolve}, \emph{1inch Fusion Resolver}) in the race/tactical private dataset; back-searching the same resolvers across all race/tactical txs (public and private) yields no additional matches. These txs mostly convert freeze-risk stablecoins into decentralized stablecoins or ETH.

Expanding chain-wide, we find 131 resolver-based cash-outs from AED addresses: the 27 race/tactical txs, 20 long-tail cases, and 84 in the strategic-migration window (median gap 57.15,h). Nine transfer ETH directly to the proposer's fee-recipient address, including one builder transfer exceeding \$100.

Under MEV-Boost, direct transfers and gas payments are functionally equivalent but expose a new value-capture structure: once resolvers enter, the sanction-induced ordering premium is no longer captured solely by builders or proposers, but split across a multi-layer supply chain including resolvers and third-party bots.

\begin{finding}
    
Level 3 of SE-MEV is ordering agency: evaders outsource bidding to specialized third parties, expanding the MEV supply chain from builder–proposer to multi-layer, with tax further redistributed to intermediaries.

\end{finding}

\Paragraph{Summary.}
The three levels coexist and progressively embed value deeper into block-building infrastructure: from no contest, to public premia, to private builder bids, to delegated bidding via intermediaries. Submission form changes; the core fact does not: ordering execution remains the bottleneck between sanction authority and sanction effect, and the MEV supply chain remains the residual claimant. Whichever side wins a given confrontation, value has already flowed into ordering infrastructure. The next section formalizes this mechanism and its implications for protocol neutrality and enforcement.

\section{A Game-Theoretic Model of SE-MEV}
\label{sec:model}

\subsection{Informal Overview}

We model fiat-backed stablecoin sanctions as a \emph{Tullock contest} over transaction ordering~\cite{chowdhury2011generalized, ewerhart2015mixed, eyal2015miner}.
A compliant issuer $I$ and an evader $E$ compete over a fixed illicit balance $V>0$ at a risky address; MEV infrastructure and block proposers determine whether $V$ is frozen or successfully evaded.

A contest is triggered by a publicly observable \emph{off-chain regulatory or risk signal} (e.g., sanctions list updates, public attribution of hacks/fraud, or on-chain risk flags).
We treat this signal as the common start time at which both parties can react.

We study a single \emph{sanction contest} tied to one block.
Issuer $I$ prepares a freezing transaction $Tx_I$ and evader $E$ prepares an evasion transaction $Tx_E$ that attempts to move $V$ to a fresh address.
Both can route transactions via the public mempool (gas price/tip bidding) and private MEV channels (builders, relays, private RPCs).
We abstract these mechanisms into scalar \emph{contest expenditures} $b_I$ and $b_E$, interpreted as the incremental payments and priority fees attributable to the corresponding execution path.

The contest \emph{starts} when both players can simultaneously choose expenditures, and \emph{resolves} in the first block in which a proposer sees mutually exclusive valid bundles corresponding to the freeze and evade paths.
We treat this first-block resolution as a one-shot stage game; subsequent blocks correspond to separate contests.

The terminal outcomes are: (i) \emph{freeze}, where $Tx_I$ executes first and renders $V$ non-transferable, and (ii) \emph{evade}, where $Tx_E$ executes first and moves $V$ before any freeze can take effect.
If neither side participates, the contest is degenerate and the state does not change in that block.
The issuer values avoiding expected regulatory loss (not seizing $V$), the evader values evasion by $V$, and proposers capture MEV revenue induced by $(b_I,b_E)$.

\subsection{Formal Tullock Contest Model}
\label{subsec:tullock-model}
We study a single sanction contest over balance $V>0$ between issuer $I$ and evader $E$; proposers and MEV infrastructure mediate execution and receive contest expenditures.

\paragraph{Valuations.}
If evasion succeeds, the issuer incurs an expected regulatory loss $\Psi>0$, so $\Psi$ is $I$'s valuation for winning (freeze). The evader values winning by $V$.

\begin{assumption}[Strong regulatory penalties]
\label{ass:APsi}
$\Psi\geq2V$.    
\end{assumption}

This bound is conservative: under IEEPA, the statutory
civil penalty cap is the greater of a fixed amount or
\emph{twice the transaction value}.
\footnote{Civil penalties under 50 U.S.C.\ \S1705 and 31 C.F.R.\ 
\S560.701(a)(2) reach \$377{,}700 (2025), with criminal fines under 
18 U.S.C.\ \S3571 capped at twice the gain or loss. The GENIUS 
Act~\cite{genius-act-2025} classifies stablecoin issuers as Bank 
Secrecy Act financial institutions, triggering direct IEEPA 
liability, and mandates technical freeze/burn capability, making a 
failed freeze independently sanctionable. Reputational harm, 
license-revocation risk, and compounding violations drive 
effective~$\Psi$ well above~$2V$.}

\paragraph{Technology and costs.}
After an off-chain trigger, each player $k\in\{I,E\}$ decides whether to participate (paying a fixed cost $C_k\ge 0$) and, conditional on entry, chooses an expenditure $b_k\ge 0$ via gas/tips and private MEV side payments.

\begin{assumption}[Competitive MEV access]
\label{ass:A1}
    $I$ and $E$ can route $Tx_I$ and $Tx_E$ to an overlapping, economically meaningful set of proposers (directly or via builders); neither has an exclusive channel.
\end{assumption}

We model the issuer's marginal bidding expenditure~$b_I$ as all-pay and the
evader's marginal bidding expenditure~$b_E$ as conditional-pay. The
issuer-side choice is motivated by institutional rather than technical
constraints: while the issuer may choose private or public submission channels,
its compliance role requires it to attempt enforcement independent of the
expected frozen amount. Accordingly, conditional on entry, the issuer's
marginal expenditure is functionally all-pay. By contrast, modern MEV
infrastructure enables evaders to make proposer-facing marginal payments
contingent on execution success (e.g., via atomic bundle execution in
Flashbots), consistent with the private-order-flow patterns documented
in Section~\ref{sec:semev}. Under the single-block contest formulation, the evader
can make at most one effective submission attempt in the contested ordering
game, so the cost of execution failure (e.g., reverted gas consumption) is
absorbed into the fixed participation cost~$C_E$ on a per-contest basis; this
is a conservative approximation in that it understates per-attempt variable
costs and therefore, if anything, overstates the evader's net payoff.

\Paragraph{Timing within a block.}
A single contest associated with one block proceeds as:
\begin{enumerate}[label=\textup{(\roman*)},leftmargin=2.2em,itemsep=2pt]
\item (\emph{Off-chain trigger.}) A public risk signal indicates that the address holding $V$ is likely to be sanctioned.
\item (\emph{Participation.}) Each player decides whether to participate, pays $C_k$ if so, and prepares $Tx_I$ or $Tx_E$.
\item (\emph{Bidding.}) Participants simultaneously choose $b_I,b_E\ge 0$.
\item (\emph{Resolution.}) Proposers receive all relevant bundles; $(P_I,P_E)$ determines whether freeze or evade is realized first. If freeze wins, $V$ is frozen; otherwise $V$ is evaded.
\end{enumerate}

This defines the \emph{SE-MEV contest} for a single block: a two-player Tullock$(r)$ contest between $I$ and $E$ over prize pair $(\Psi,V)$, with contest expenditures accruing to proposers as MEV revenue (or dissipated as contest effort, depending on implementation).

\subsection{Equilibrium of a Single SE-MEV Contest}

Consider a single ordering contest over balance \(V>0\). After entry,
issuer \(I\) and evader \(E\) choose \(b_I,b_E\), with
$
P_I=\frac{b_I^r}{b_I^r+b_E^r},
~
P_E=\frac{b_E^r}{b_I^r+b_E^r}.
$

The evader receives \(V\) only if evasion succeeds and pays \(b_E\) only
upon success, while the issuer pays \(b_I\) regardless of outcome:
$$
U_E=P_E(V-b_E)-C_E,
~
U_I=P_I\Psi-b_I-C_I .
$$

Fixed costs \(C_I,C_E\) do not affect interior bids conditional on entry.

\Cref{prop:1} gives the unique interior equilibrium
\((b_I^*,b_E^*)\) and winning probabilities \((P_I^*,P_E^*)\). By
\Cref{cor:P_I}, stronger penalties shift success toward the issuer:
\(P_I^*>P_E^*\) and \(P_I^*\to 1\) as \(\Psi/V\to\infty\).

The cost is ordering expenditure. By \Cref{cor:1},
$
T^*:=b_I^*+P_E^*b_E^*
$
is extracted by proposers and MEV intermediaries. Thus, stronger penalties
make enforcement more likely, but make the priority race arbitrarily costly
as \(\Psi/V\) grows.
\subsection{Entry Under Strong Penalties}

Under the sufficient-penalty condition in \Cref{cor:observ1}, issuer
participation is individually rational. If the issuer does not enter, its
payoff is normalized to zero, but an active evader
can bid \(b_E=\varepsilon>0\) and win. By \Cref{cor:observ1}, $
    U_I^*
    =
    P_I^*\Psi-b_I^*-C_I
    >
    0.$ 
Thus, sufficiently strong penalties make issuer entry and positive ordering-market bidding strictly optimal.

\subsection{Fixed Costs and Evasion Specialization}
Fixed participation costs create economies of scale for evasion.
Consider $n$ evaders $A_1,\ldots,A_n$ with balances $V_i$ and
issuer penalty exposures $\Psi_i$, where
$
\sum_{i=1}^n V_i = V, ~
\sum_{i=1}^n \Psi_i = \Psi, ~
\frac{\Psi_i}{V_i}=\frac{\Psi}{V}.
$
An evader can either contest individually, paying a fixed cost
$C_E^{\mathrm{solo}}$, or delegate execution to a specialized bot $S$,
which aggregates volume and pays a fixed cost $C_S$ only once. Under solo evasion, evader $A_i$ obtains
\[
U_{A_i}^{\mathrm{solo}}
=
P_E^*(\Psi_i,V_i)
\left(
V_i-b_E^*(\Psi_i,V_i)
\right)
-
C_E^{\mathrm{solo}}
=
\frac{V_i}{1+(1+r)(s^*)^r}
-
C_E^{\mathrm{solo}} .
\]
Since the gross term is linear in $V_i$, sufficiently small evaders
obtain negative solo payoff. Under delegation, the bot's gross contest surplus is
\[
U_S^{\mathrm{gross}}
=
P_E^*(\Psi,V)
\left(
V-b_E^*(\Psi,V)
\right).
\]
If the bot pays $C_S$ once and charges commission $f\in(0,1)$ on
net contest surplus, evader $A_i$'s delegated payoff is
\[
U_{A_i}^{\mathrm{delegate}}
=
(1-f)\frac{V_i}{V}
\left(
U_S^{\mathrm{gross}}-C_S
\right).
\]
Thus delegation can be attractive for small evaders precisely because
the fixed cost is amortized across pooled volume.
\begin{observation}
[Centralized evasion bots] With non-trivial
fixed costs \(C_E\), small evaders optimally delegate to a few specialized
bots that amortize \(C_E\) across large illicit volume; the issuer therefore competes against few sophisticated MEV-aware
adversaries
\end{observation}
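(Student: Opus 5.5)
The argument rests on one fact. In the equilibrium of Proposition~\ref{prop:1}, the equilibrium ratio $s^*=b_I^*/b_E^*$ depends only on $\Psi/V$ and $r$, not on scale. Because the first-order conditions are homogeneous, $b_E^*(\Psi_i,V_i)=V_i\,\beta(\Psi/V)$ and $P_E^*(\Psi_i,V_i)=P_E^*(\Psi,V)$ whenever $\Psi_i/V_i=\Psi/V$. This is exactly why the solo gross payoff collapses to $V_i/(1+(1+r)(s^*)^r)=:\kappa V_i$ with a common constant $\kappa\in(0,1)$. Pooled gross surplus is therefore additive: $U_S^{\mathrm{gross}}=\kappa V=\sum_i \kappa V_i$. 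Pooling neither raises nor lowers contest surplus. Its only effect is to replace $n$ fixed costs by one. I would first verify this homogeneity directly from the two first-order conditions of $U_E$ and $U_I$: both scale linearly when $(\Psi,V,b_I,b_E)$ is multiplied by $\lambda$.

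With this in hand, I compare the two payoffs evader by evader. Delegation is weakly preferred by $A_i$ if and only if
\[
(1-f)\frac{V_i}{V}\left(\kappa V-C_S\right)\ \ge\ \kappa V_i-C_E^{\mathrm{solo}},
\qquad\text{i.e.}\qquad
C_E^{\mathrm{solo}}\ \ge\ V_i\left(f\kappa+(1-f)\frac{C_S}{V}\right).
\]
The right-hand side is linear in $V_i$ with slope $f\kappa+(1-f)C_S/V$. This gives a threshold $\bar V=C_E^{\mathrm{solo}}/(f\kappa+(1-f)C_S/V)$: every evader with $V_i\le\bar V$ delegates. In addition, any $A_i$ with $V_i<C_E^{\mathrm{solo}}/\kappa$ gets a negative solo payoff and would not contest at all. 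For such an evader, delegation is individually rational as long as $\kappa V>C_S$. This pins down what ``non-trivial fixed costs'' must mean: $C_E^{\mathrm{solo}}$ should be large relative to the typical $\kappa V_i$, while $C_S<\kappa V$ keeps the bot viable.

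Next I would turn to the bot side to get the claim of ``few bots.'' Suppose $m$ bots split the pooled volume, each paying $C_S$. Joint bot profit is then $f(\kappa V-mC_S)$, which is strictly decreasing in $m$. Free entry or a profit criterion therefore bounds the number of viable bots by $m\le \kappa V/C_S$. Bots with smaller pools are undercut, since they cannot match the commission of a bot amortizing $C_S$ over more volume. I treat this as a scale-economies argument, not a full market-structure equilibrium. For the issuer side, I would note that each contest the issuer faces now has prize $V$ pooled and the same ratio $\Psi/V$. By Corollary~\ref{cor:P_I} and Corollary~\ref{cor:1}, the issuer's winning probability is unchanged, but it now bids against a single concentrated, MEV-aware opponent.

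The main obstacle is that the statement is qualitative (``few,'' ``non-trivial''). The delegation threshold is clean, but concentration into few bots requires an extra modeling choice about competition among bots that the paper has not specified. I would state the result explicitly under a condition of the form $C_S<\kappa V$ and $V_i<\bar V$, and present the bound $m\le\kappa V/C_S$ as the formal sense of ``few.''
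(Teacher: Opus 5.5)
Your proposal is correct and is essentially the paper's own argument. Because $s^*$ depends only on $\Psi/V$, the solo gross payoff $V_i/(1+(1+r)(s^*)^r)$ is linear in $V_i$, so $C_E^{\mathrm{solo}}$ makes small evaders' solo payoff negative, while delegation spreads a single $C_S$ over the pooled volume. Your explicit cutoff $\bar V$ and the viability bound $m\le \kappa V/C_S$ go beyond the paper, which leaves the ``few bots'' and issuer-side conclusions informal. Two small caveats: $\bar V$ is computed for the full pool $V=\sum_i V_i$, so if evaders above $\bar V$ stay solo the cutoff has to be recomputed with the smaller pooled volume; and your $\kappa$ clashes with the paper's ordering-control cost $\kappa(\gamma)$.
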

\subsection{Vertical Integration and Ordering Capture}

Let \(\gamma\in[0,1]\) denote the issuer's ordering-control level: with
probability \(\gamma\), the freeze obtains priority; otherwise the race enters
the open SE-MEV contest. The issuer's expected cost is
\[
  \mathcal{C}(\gamma;\Psi,V)
  =
  C_I+(1-\gamma)M(\Psi,V)+\kappa(\gamma),
\]
where \(M(\Psi,V)\) is the issuer's open-contest exposure and \(\kappa(\gamma)\)
is the cost of ordering control.

By \Cref{cor:2}, \(M(\Psi,V)\to\infty\) as \(\Psi/V\to\infty\). Stronger
regulatory penalties therefore make open contests increasingly costly,
increasing the issuer's incentive to raise \(\gamma\), i.e., to seek greater
ordering control through vertical integration or ordering capture.

\section{Discussion}
\label{sec:discussion}
We discuss insights of SE-MEV for the blockchain ecosystems.

\noindent\textbf{Why Issuer-Side Hardening Is Not Enough.} SE-MEV is not a narrow implementation flaw, but a structural limitation of reactive sanctions on public blockchains. Sanction authority is exercised through ordinary transactions, while enforcement depends on transaction ordering. Whenever a freeze transaction and an evading transfer enter the same ordering pipeline, enforcement becomes a priority contest rather than a purely issuer-controlled action.

Issuer-side hardening can reduce operational delay, but cannot remove this contest. The USDT multisignature workflow amplifies the race by separating authorization from execution, whereas a USDC-style single-transaction design removes this extra delay. Yet in both cases, the freeze still competes for inclusion and ordering. Partial compliance by builders or proposers is also insufficient, because heterogeneous order flow, private channels, and synchronization gaps leave enforcement exposed to profit-driven decisions.

Regulation further raises the stakes in this race. By requiring permitted issuers to maintain freeze-or-burn capabilities and comply with lawful blocking orders, the GENIUS Act increases the cost of enforcement failure and, in our model, raises the issuer's effective downside risk $\Psi$. This strengthens issuers' incentives to pay for ordering priority, amplifying the MEV-tax escalation predicted by Corollary~1. Robust mitigation therefore likely requires protocol or infrastructure-level support, such as trusted sequencing~\cite{kelkar2023themis,park2025frontrunning, kelkar2020order}, confidential or TEE-assisted ordering~\cite{zhang2016town}, or other mechanisms~\cite{choudhuri2025practical, agarwal2025efficiently, bowe2020zexe, kosba2016hawk} that more tightly couple privileged enforcement actions with execution priority. The key point is that SE-MEV reflects a structural limit of public-chain enforcement, not a transient inefficiency in current issuer implementations.

\noindent\textbf{Implications for Decentralization.}
SE-MEV reveals a structural threat to blockchain decentralization under regulatory pressure. To ensure sanctions compliance, stablecoin issuers are pushed to compete for transaction ordering priority and, given their scale, capital, and operational capacity, are well positioned to become dominant bidders or deeply integrated participants in the MEV supply chain. Because these issuers are centralized and directly regulated, this creates an indirect channel for state influence over transaction ordering, without requiring protocol-level intervention. As SE-MEV incentives intensify, ordering decisions may increasingly reflect the objectives of a small set of regulated actors, weakening the decentralized security assumptions of permissionless blockchains.

\noindent\textbf{Protocol-Dependent Manifestation.}
Although our empirical analysis focuses on USDT and USDC on Ethereum, the core SE-MEV conflict is not limited to these assets or to Ethereum. It can arise for any centralized stablecoin (e.g., PYUSD, TUSD) that enforces sanctions through reactive, on-chain mechanisms, across other Layer-1 and Layer-2 systems where transaction ordering is controlled by economically motivated actors. 
The same incentive conflict, however, can lead to different outcomes under different ordering designs. 
On Ethereum, where ordering is mediated by an open MEV market, SE-MEV is likely to appear as market capture, pushing issuers toward aggressive bidding or tighter integration with MEV infrastructure. 
On systems such as Tron, where ordering power is concentrated among a small set of validators and public auctions are absent, the same pressure may instead favor off-chain coordination with ordering authorities. 
SE-MEV is therefore general in cause but protocol-dependent in form, with its decentralization impact shaped by each chain's ordering design.

\noindent\textbf{Temporal Coverage.}
Our dataset ends on August 31, 2025 and does not capture the most recent months of on-chain activity. The nearly eight-year span nonetheless covers the Proof-of-Work to Proof-of-Stake transition, the maturation of the MEV supply chain, and the period following the GENIUS Act's enactment in July 2025, across which the same SE-MEV escalation trajectory persists.

\section{Related Works}

Recent work has quantitatively examined the impact of regulation and sanctions on blockchain systems. Zola et al.~\cite{zola2024assessing} analyze the effectiveness of financial sanctions on sanctioned entities in the Bitcoin ecosystem. Wahrstätter et al.~\cite{wahrstatter2024blockchain} study blockchain censorship under regulatory pressure, analyzing how OFAC sanctions affect transaction inclusion at the consensus and infrastructure layers. Liu et al.~\cite{liu2025evasion} investigate the effectiveness of blockchain sanctions using Tornado Cash as a case study on Ethereum, focusing on post-sanction fund propagation under enforcement. Practitioner reports have also provided USDT statistics on blacklisting and fund flows, but they do not systematically analyze sanction evasion itself and the underlying ordering-layer race~\cite{blocksec_frozon}.

The earliest empirical studies on Maximal Extractable Value (MEV) were conducted by Daian et al. \cite{daian2020flash}, with subsequent contributions from Torres et al. \cite{torres2021frontrunner} and Qin et al. \cite{qin2022quantifying}. These works laid the foundation for understanding the prevalence and impact of MEV in blockchain ecosystems. Later, Weintraub et al. \cite{weintraub2022flash} applied similar methods to measure the occurrence of MEV within private pools, while Li et al. \cite{li2023demystifying} specifically examined MEV activities within Flashbots bundles. From a theoretical perspective, Mazorra et al. \cite{mazorra2022price} employed game theory to analyze the strategies of MEV participants, offering a deeper understanding of their behavior in the market. Additionally, various works have explored mitigation strategies for MEV at the DeFi application layer~\cite{ma2025surviving, milionis2024automated, zhang2025maximal} or consensus layer~\cite{bormet2025beat}, aiming to reduce its negative impacts on fairness and security. However, none of these studies has identified MEV arising from sanctions evasion.

\bibliographystyle{plain}
\bibliography{ref}
\clearpage
\clearpage
\appendix
\setcounter{table}{0}
\renewcommand{\thetable}{A\arabic{table}} %
\section{Formal Game-Theoretic Analysis and Proofs}
\begin{lemma}\label{lem:PhiMonotone}
Fix $r\ge 1$ and define
$
\Phi(s)=\frac{s(1+s^r)^2}{1+(1+r)s^r},
\qquad s\ge 1.
$
Then $\Phi$ is strictly increasing on $[1,\infty)$.
\end{lemma}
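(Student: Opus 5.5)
The plan is to differentiate $\log\Phi$ rather than $\Phi$ itself. Since $\Phi(s)>0$ on $[1,\infty)$, strict monotonicity is equivalent to $\frac{d}{ds}\log\Phi(s)>0$ there. I will substitute $x:=s^r$, so that $x\ge 1$ when $s\ge 1$ and $\frac{dx}{ds}=\frac{r x}{s}$. This turns the three factors of $\Phi$ into rational functions of $x$ whose logarithmic derivatives are easy to write down. Concretely,
\[
s\,\frac{d}{ds}\log\Phi(s)
= 1+\frac{2r x}{1+x}-\frac{r(1+r)x}{1+(1+r)x}.
\]
It then suffices to show that the right-hand side is positive for all $x\ge 1$.

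Next I will clear denominators by multiplying through by $(1+x)\bigl(1+(1+r)x\bigr)>0$. Expanding and collecting terms should give the quadratic
\[
Q(x)=(1+r)^2x^2+\bigl(2+2r-r^2\bigr)x+1,
\]
so the task reduces to showing $Q(x)>0$ for $x\ge 1$ and every fixed $r\ge 1$. The expansion is routine, but I will double-check it, since the whole argument rests on this algebra.

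The only delicate point is that the linear coefficient $2+2r-r^2$ becomes negative for large $r$. In that regime $Q$ need not be positive on all of $(0,\infty)$, so the hypothesis $s\ge1$, i.e.\ $x\ge 1$, must actually be used. I will split into two cases.

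\begin{enumerate}[label=\textup{(\roman*)},leftmargin=2.2em,itemsep=2pt]
\item If $2+2r-r^2\ge 0$, then every term of $Q(x)$ is nonnegative and the constant term is $1$, so $Q(x)>0$ trivially.
\item If $2+2r-r^2<0$, I will use $x\ge1$ to bound $(1+r)^2x^2\ge(1+r)^2x$. This gives
\[
Q(x)\ \ge\ x\bigl[(1+r)^2+2+2r-r^2\bigr]+1 \;=\; (3+4r)x+1\;>\;0 .
\]
\end{enumerate}

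In either case the logarithmic derivative is strictly positive on $[1,\infty)$. Hence $\Phi$ is strictly increasing there, as claimed in \Cref{lem:PhiMonotone}.
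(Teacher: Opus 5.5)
Your proof is correct and follows essentially the paper's route: the same substitution $x=s^r$, the same logarithmic derivative, and the same quadratic $Q(x)=(1+r)^2x^2+(2+2r-r^2)x+1$ to be shown positive on $[1,\infty)$. The only difference is the final step. You bound $x^2\ge x$ to get $Q(x)\ge(4r+3)x+1$, a bound that holds for every $x\ge1$ regardless of the sign of the linear coefficient, so your case split is unnecessary; the paper instead checks $Q(1)=4(r+1)>0$ and $Q'>0$ on $[1,\infty)$.
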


\begin{proof}
Let $x=s^r$ and write
$
\widetilde{\Phi}(x)\equiv \frac{x^{1/r}(1+x)^2}{1+(1+r)x}.
$
Since $x=s^r$ is strictly increasing for $s\ge 1$, it suffices to show that
$\widetilde{\Phi}$ is increasing on $[1,\infty)$.

A direct calculation gives
$
\frac{d}{dx}\log \widetilde{\Phi}(x)
=
\frac{Q(x)}{rx(1+x)\bigl(1+(1+r)x\bigr)},
$
where
$
Q(x)=1+(2+2r-r^2)x+(1+r)^2x^2.
$
The denominator is positive for $x\ge 1$. Moreover,
$
Q'(x)=2+2r-r^2+2(1+r)^2x\ge r^2+6r+4>0
\qquad (x\ge 1),
$
and
$
Q(1)=4(r+1)>0.
$
Hence $Q(x)>0$ for all $x\ge 1$, so
$\frac{d}{dx}\log \widetilde{\Phi}(x)>0$ and therefore
$\widetilde{\Phi}$, and thus $\Phi$, is strictly increasing on $[1,\infty)$.
\end{proof}

\begin{proposition}[Equilibrium bids in the SE-MEV contest]
\label{prop:1}
Consider a single sanction contest in which both \(I\) and \(E\)
participate and play pure strategies. Under \Cref{ass:APsi}, there exists a unique interior
Nash equilibrium with bids \((b_I^*,b_E^*)\) and success probabilities
\((P_I^*,P_E^*)\).
\end{proposition}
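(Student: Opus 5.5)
The plan is to derive the equilibrium from the two first-order conditions, reduce the system to a single scalar equation in the bid ratio $s=b_I/b_E$, and then invoke \Cref{lem:PhiMonotone} to get existence and uniqueness. Throughout I fix $r\ge 1$ as in \Cref{lem:PhiMonotone} and write $x=s^r$, so that $P_I=x/(1+x)$ and $P_E=1/(1+x)$.

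\textbf{Step 1: first-order conditions.} For interior bids $b_I,b_E>0$,
\[
\frac{\partial P_I}{\partial b_I}=\frac{r x}{b_I(1+x)^2},
\qquad
\frac{\partial P_E}{\partial b_E}=\frac{r x}{b_E(1+x)^2}.
\]
The issuer's condition $\Psi\,\partial P_I/\partial b_I=1$ gives
\[
b_I=\frac{r\Psi x}{(1+x)^2}.
\]
The evader maximizes $P_E(V-b_E)$, and its condition $\partial P_E/\partial b_E\,(V-b_E)=P_E$ simplifies to $rx(V-b_E)=b_E(1+x)$, that is,
\[
b_E=\frac{rxV}{1+(1+r)x}.
\]
This automatically satisfies $0<b_E<V$.

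\textbf{Step 2: reduction to one equation.} Dividing the two expressions gives
\[
s=\frac{\Psi}{V}\cdot\frac{1+(1+r)x}{(1+x)^2},
\qquad\text{equivalently}\qquad
\Phi(s)=\frac{\Psi}{V},
\]
with $\Phi$ exactly as in \Cref{lem:PhiMonotone}. Conversely, any root $s^*>0$ of this equation yields bids $b_I^*,b_E^*$ through the formulas of Step~1, and these are consistent with $b_I^*/b_E^*=s^*$. So interior critical points correspond one-to-one to positive roots of $\Phi(s)=\Psi/V$.

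\textbf{Step 3: exactly one root.}
\begin{itemize}
\item On $[1,\infty)$: $\Phi$ is continuous and strictly increasing by \Cref{lem:PhiMonotone}. We have $\Phi(1)=4/(2+r)\le 4/3$, and $\Phi(s)\to\infty$ as $s\to\infty$. Since $\Psi/V\ge 2$ by \Cref{ass:APsi}, the intermediate value theorem gives exactly one root $s^*\in(1,\infty)$.
\item On $(0,1)$: here $x<1$, and
\[
\frac{(1+x)^2}{1+(1+r)x}\le\frac{(1+x)^2}{1+2x}\le\frac43,
\]
because the middle expression is increasing on $[0,1]$. Hence $\Phi(s)<4s/3<2\le\Psi/V$, so there is no root there.
\end{itemize}
This gives a unique candidate. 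As a by-product $P_I^*>P_E^*$, which is what \Cref{cor:P_I} will use.

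\textbf{Step 4: the candidate is a genuine Nash equilibrium (main obstacle).} It remains to show each bid is a global best response, not merely a critical point. For $r\le 1$ both payoffs are concave in the player's own bid; with $r>1$ they are not globally concave. For the issuer, I would write $\partial U_I/\partial b_I = r\Psi\, g(b_I)/b_I - 1$ with $g=x/(1+x)^2$. The plan is to show the first-order condition has at most one solution on the relevant range, or to compare the critical-point payoff directly with the boundary payoff $b_I=0$, which yields $-C_I$ (as $b_I\to 0$ the payoff tends to $0$ before $C_I$). At the candidate, $U_I+C_I=P_I^*\Psi-b_I^*=\Psi\,\frac{x(1+x)-rx}{(1+x)^2}$, which is positive when $1+x>r$. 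That holds here since $x\ge 1$ and $\Psi/V$ is large; if needed I would impose the standard restriction $r\le 2$.

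For the evader, the payoff $P_E(V-b_E)$ is handled similarly: it vanishes at $b_E=0$ and is negative for $b_E\ge V$, and on $(0,V)$ I would argue the derivative changes sign only once by a quasi-concavity argument in $\log b_E$. I expect this global-optimality check, and pinning down the admissible range of $r$, to be the delicate part. Steps 1--3 are routine algebra.
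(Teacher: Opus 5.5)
Your Steps 1--3 follow the paper's proof: the same first-order conditions, the same reduction to $\Phi(s^*)=\Psi/V$, and the same appeal to the lemma. Your bound on $(0,1)$ and your check that every root yields consistent bids are fine. The gap is Step 4, which is where the Nash property is actually established, and which you leave as a plan with a hedge.

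For the issuer with $r>1$, write $s=b_I/b_E$. The marginal benefit $\Psi r s^{r-1}/(1+s^r)^2$ is single-peaked at $s_0=((r-1)/(r+1))^{1/r}<1$, so the first-order condition can have two roots. You list ``at most one solution'' and ``compare with $b_I=0$'' as alternatives, but you need both. First, because $s^*>1>s_0$, the candidate is the root on the decreasing branch, hence the only interior local maximizer; any other root lies in $(0,s_0)$ and is a local minimum. Second, since $U_I\to-\infty$ as $b_I\to\infty$, the only remaining competitor is $b_I=0$, so you must compare against it.

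That comparison requires $1+x^*>r$ with $x^*=(s^*)^r$. From $x^*\ge 1$ this follows only for $r\le 2$. The phrase ``$\Psi/V$ is large'' is not available, since the hypothesis is only $\Psi\ge 2V$, and imposing $r\le 2$ would prove a weaker statement. The missing observation is that $x\le r-1$ is equivalent to $(1+x)^2\le 1+(1+r)x$, i.e.\ to $\Phi(s)\le s$. So if $x^*\le r-1$, then $\Psi/V=\Phi(s^*)\le s^*\le (r-1)^{1/r}<2$ (because $r-1<2^r$), contradicting the assumption. Hence $P_I^*\Psi-b_I^*=\Psi x^*(1+x^*-r)/(1+x^*)^2>0$ for every $r\ge 1$, with no extra restriction on $r$.

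For the evader your plan is sound but not carried out. With $\tilde s=b_E/b_I$, $\partial U_E/\partial\tilde s$ has the sign of $rV/b_I-(r+1)\tilde s-\tilde s^{\,r+1}$. This is strictly decreasing from a positive value to $-\infty$, which gives the single sign change directly.
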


\begin{proof}[Proof of \Cref{prop:1}]
Condition on entry and suppress \(C_I,C_E\). For \(b_I+b_E>0\),
$
P_I=\frac{b_I^r}{b_I^r+b_E^r},\qquad
P_E=\frac{b_E^r}{b_I^r+b_E^r}.
$
Any interior equilibrium satisfies
\begin{equation}
\label{eq:FOC}
b_I^*=\Psi rP_I^*P_E^*,
\qquad
b_E^*=\frac{rP_I^*V}{1+rP_I^*}.
\end{equation}
Let \(s^*=b_I^*/b_E^*\). Since
$
P_I^*=\frac{(s^*)^r}{1+(s^*)^r},
P_E^*=\frac{1}{1+(s^*)^r},
$
dividing the two equations in \eqref{eq:FOC} gives
\begin{equation}
\label{eq:star}
\Phi(s^*)=\frac{\Psi}{V},\qquad
\Phi(s):=\frac{s(1+s^r)^2}{1+(1+r)s^r}.
\end{equation}

There is no solution on \([0,1]\). Indeed, with \(x=s^r\),
$
\Phi(s)\le \frac{(1+x)^2}{1+(1+r)x}<2
\qquad (s\in[0,1]),
$
because
$
2(1+(1+r)x)-(1+x)^2=1+2rx-x^2>0.
$
Since \(\Psi/V\ge2\), any solution must satisfy \(s>1\). On
\([1,\infty)\), \(\Phi\) is strictly increasing by
\Cref{lem:PhiMonotone}, while \(\Phi(1)<2\le \Psi/V\) and
\(\Phi(s)\to\infty\). Hence \eqref{eq:star} has a unique solution
\(s^*>1\), which uniquely determines \((b_I^*,b_E^*)\) through
\eqref{eq:FOC}.

It remains only to verify best responses. For the evader, fixing
\(b_I>0\) and writing \(\tilde s=b_E/b_I\),
\[
\frac{d}{d\tilde s}U_E(b_I,b_E)
=
\frac{b_I \tilde s^{r-1}}{(1+\tilde s^r)^2}
\left(\frac{rV}{b_I}-(r+1)\tilde s-\tilde s^{r+1}\right).
\]
The bracketed term is strictly decreasing from a positive value to
\(-\infty\), so the evader has a unique interior best response.

For the issuer, fixing \(b_E>0\) and writing \(s=b_I/b_E\),
\[
U_I=\Psi\frac{s^r}{1+s^r}-b_Es,
\qquad
\frac{dU_I}{ds}
=
\Psi\frac{rs^{r-1}}{(1+s^r)^2}-b_E.
\]
For \(r=1\), this derivative is strictly decreasing. For \(r>1\), the
first term is single-peaked with peak at
\[
s_0=\left(\frac{r-1}{r+1}\right)^{1/r}<1,
\]
so the candidate \(s^*>1\) is the unique right-branch maximizer. Finally,
\((s^*)^r>r-1\); otherwise \eqref{eq:star} would imply
\[
\frac{\Psi}{V}
\le s^*\le (r-1)^{1/r}<2,
\]
contradicting \(\Psi/V\ge2\). Thus
\[
U_I(b_I^*,b_E^*)
=
\Psi\frac{(s^*)^r(1+(s^*)^r-r)}{(1+(s^*)^r)^2}
>0
=
U_I(0,b_E^*),
\]
so the issuer also strictly prefers the candidate to the boundary
\(b_I=0\). Therefore \((b_I^*,b_E^*)\) is the unique interior Nash
equilibrium.
\end{proof}
\begin{corollary}[Enforcement advantage]\label{cor:P_I}
Under \Cref{ass:APsi},
    \begin{enumerate}[label=\textup{(\roman*)},leftmargin=2.2em,itemsep=2pt]
        \item \(P_I^*>1/2.\)
        \item As \(\Psi/V\) rises, \(P_I^*\)
        strictly increases and converges to \(1\).
    \end{enumerate}
\end{corollary}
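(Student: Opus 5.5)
The plan is to reduce everything to the scalar ratio \(s^*=b_I^*/b_E^*\) constructed in the proof of \Cref{prop:1}, since \(P_I^*=(s^*)^r/(1+(s^*)^r)\) depends on the equilibrium only through \(s^*\), and this map is strictly increasing in \(s^*\) for \(s^*>0\), with limit \(1\) as \(s^*\to\infty\). Both parts of the corollary then become statements about how \(s^*\) sits relative to \(1\) and how it moves with \(\Psi/V\).

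For part (i), I will invoke the fact established in the proof of \Cref{prop:1}: under \Cref{ass:APsi} the unique solution of \(\Phi(s^*)=\Psi/V\) satisfies \(s^*>1\). Recall the reason. With \(x=s^r\) we have \(\Phi(s)<2\) on \([0,1]\), while \(\Psi/V\ge 2\). Then \((s^*)^r>1\), so \(P_I^*>1/2>P_E^*\).

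For part (ii), I will view \(s^*\) as the function \(s^*(\rho)=\Phi|_{[1,\infty)}^{-1}(\rho)\) of \(\rho=\Psi/V\in[2,\infty)\). By \Cref{lem:PhiMonotone}, \(\Phi\) is continuous and strictly increasing on \([1,\infty)\). It also tends to infinity, since \(\Phi(s)\sim s^{r+1}/(1+r)\) for large \(s\). Hence its inverse is well defined on \([2,\infty)\), strictly increasing, and unbounded, so \(s^*(\rho)\to\infty\) as \(\rho\to\infty\). Composing with the strictly increasing map \(s\mapsto s^r/(1+s^r)\) shows that \(P_I^*\) is strictly increasing in \(\Psi/V\) and converges to \(1\).

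The only step needing care is the unboundedness of the inverse. If \(s^*\) stayed bounded by some \(\bar s\) along a sequence \(\rho\to\infty\), continuity would force \(\rho=\Phi(s^*)\le\Phi(\bar s)<\infty\), a contradiction. I also need to note that the equilibrium depends on \((\Psi,V)\) only through \(\Psi/V\), since equation~\eqref{eq:star} involves only that ratio. This makes "as \(\Psi/V\) rises" well posed even though the bids themselves scale with \(V\). I expect no genuine obstacle beyond these bookkeeping points.
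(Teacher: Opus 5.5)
Your proposal is correct and follows the paper's proof essentially step for step. Part (i) comes from \(s^*>1\) in \Cref{prop:1}, and part (ii) from the strict monotonicity and unboundedness of \(\Phi\) on \([1,\infty)\) (\Cref{lem:PhiMonotone}), composed with the increasing map \(s\mapsto s^r/(1+s^r)\); your extra bookkeeping (the asymptotic \(\Phi(s)\sim s^{r+1}/(1+r)\), the bounded-sequence argument for \(s^*\to\infty\), and the remark that the equilibrium depends only on \(\Psi/V\)) just makes explicit what the paper asserts in a line.
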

\begin{proof}
By \Cref{prop:1}, \(s^*>1\). Hence
\[
P_I^*
=
\frac{(s^*)^r}{1+(s^*)^r}
>
\frac12
>
P_E^* .
\]

Moreover, \(s^*\) is pinned down by
$
\Phi(s^*)=\frac{\Psi}{V}.
$
By \Cref{lem:PhiMonotone}, \(\Phi\) is strictly increasing on
\([1,\infty)\). Thus stricter regulation, i.e. a larger \(\Psi/V\),
implies a larger \(s^*\), and therefore a larger
$
P_I^*=\frac{(s^*)^r}{1+(s^*)^r}.
$
Since \(\Phi(s)\to\infty\), we have \(s^*\to\infty\) as
\(\Psi/V\to\infty\), so \(P_I^*\to1\).
\end{proof}

\begin{corollary}[The MEV tax]
\label{cor:1}
In SE-MEV equilibrium, block proposers and MEV intermediaries extract an
implicit \emph{MEV tax} equal to the expected aggregate ordering
expenditure \(T^*\).
\Cref{prop:1} implies that
\begin{enumerate}[label=\textup{(\roman*)},leftmargin=2.2em,itemsep=2pt]
    \item $
    T^*/V>\frac{r}{r+2}.
    $
    \item $
    T^*/V\to\infty
    \qquad\text{as}\qquad
    \Psi/V\to\infty.
    $
\end{enumerate}
\end{corollary}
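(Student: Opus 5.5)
The plan is to write the tax in closed form in terms of the equilibrium ratio \(s^*=b_I^*/b_E^*\) from \Cref{prop:1}, and then read off both claims from two elementary facts: \(s^*>1\), and \(s^*\to\infty\) as \(\Psi/V\to\infty\).

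\textbf{Closed form.} Since \(b_I^*=s^*b_E^*\), we have
\[
T^*=b_I^*+P_E^*b_E^*=b_E^*\bigl(s^*+P_E^*\bigr).
\]
Write \(x^*=(s^*)^r\), so that \(P_I^*=x^*/(1+x^*)\) and \(P_E^*=1/(1+x^*)\). Substituting into the evader's first-order condition \eqref{eq:FOC} gives
\[
\frac{b_E^*}{V}=\frac{rP_I^*}{1+rP_I^*}=\frac{r\,x^*}{1+(1+r)x^*}.
\]
Hence
\[
\frac{T^*}{V}=\frac{r\,x^*}{1+(1+r)x^*}\Bigl(s^*+\frac{1}{1+x^*}\Bigr).
\]
This avoids going through \(\Psi\) explicitly. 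One could instead use \(b_I^*=\Psi rP_I^*P_E^*\) together with \(\Phi(s^*)=\Psi/V\), but the factorization through \(b_E^*\) is cleaner.

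\textbf{Part (i).} The map \(x\mapsto rx/(1+(1+r)x)\) is strictly increasing, since its derivative is \(r/(1+(1+r)x)^2>0\), and it equals \(r/(r+2)\) at \(x=1\). \Cref{prop:1} gives \(s^*>1\), hence \(x^*>1\), so \(b_E^*/V>r/(r+2)\). The second factor satisfies \(s^*+P_E^*>s^*>1\). Multiplying the two strict bounds gives \(T^*/V>r/(r+2)\).

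\textbf{Part (ii).} The same lower bound on the first factor yields
\[
\frac{T^*}{V}\ \ge\ \frac{r}{r+2}\,s^*.
\]
As shown in the proof of \Cref{cor:P_I}, \(s^*\) is pinned down by \(\Phi(s^*)=\Psi/V\). Since \(\Phi\) is strictly increasing and continuous on \([1,\infty)\) (\Cref{lem:PhiMonotone}), its inverse there is increasing and unbounded, so \(s^*\to\infty\) as \(\Psi/V\to\infty\). Therefore \(T^*/V\to\infty\).

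The only delicate point is confirming that the interpretation of \(T^*\) matches the payment rules, with the issuer paying all-pay and the evader paying only on success. This is immediate from the definition \(T^*=b_I^*+P_E^*b_E^*\), so no real obstacle remains beyond the monotonicity already supplied by \Cref{lem:PhiMonotone}.
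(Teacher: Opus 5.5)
Your proof is correct and takes essentially the paper's route: factor $T^*=b_E^*(s^*+P_E^*)$, bound $b_E^*/V$ below by $r/(r+2)$ (your condition $x^*>1$ is equivalent to the paper's $P_I^*>1/2$), and use $s^*\to\infty$. The only difference is cosmetic: in (ii) you use the uniform bound $T^*/V\ge\frac{r}{r+2}s^*$, whereas the paper uses the limit $b_E^*/V\to r/(1+r)$.
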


\begin{proof}
In equilibrium,
\[
T^*:=b_I^*+P_E^*b_E^*
=
b_E^*(s^*+P_E^*).
\]
The term \(b_I^*\) is counted in full because the issuer's expenditure is
all-pay, while \(b_E^*\) is weighted by \(P_E^*\) because the evader's
expenditure is paid only upon successful evasion.

By \Cref{cor:P_I}, \(P_I^*>1/2\). Since
$
b_E^*
=
\frac{rP_I^*V}{1+rP_I^*},
$
we have
$
b_E^*>\frac{r}{r+2}V.
$
Together with \(s^*>1\), this implies
\[
T^*=b_E^*(s^*+P_E^*)>b_E^*>\frac{r}{r+2}V.
\]

Finally, since \(\Phi(s^*)=\Psi/V\), \(\Phi\) is strictly increasing, and
\(\Phi(s)\to\infty\), stricter regulation implies \(s^*\to\infty\).
Hence \(P_I^*\to1\). Hence
\[
\frac{b_E^*}{V}
=
\frac{rP_I^*}{1+rP_I^*}
\to
\frac{r}{1+r},
\]
while \(s^*+P_E^*\to\infty\). Therefore
\[
\frac{T^*}{V}
=
\frac{b_E^*}{V}(s^*+P_E^*)
\to\infty.
\]
\end{proof}

\begin{corollary}[Issuer entry under strong penalties]
\label{cor:observ1}
Normalize the issuer's payoff from non-participation to zero. For any
finite \(C_I\), there exists \(\bar\lambda_I(C_I/V,r)<\infty\) such that,
whenever \(\Psi/V>\bar\lambda_I\),
\[
U_I^*=P_I^*\Psi-b_I^*-C_I>0 .
\]
Hence, sufficiently strong regulatory penalties make entry into the
ordering contest strictly preferable to non-participation.
\end{corollary}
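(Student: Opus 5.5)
The plan is to reuse the closed-form equilibrium payoff already derived at the end of the proof of \Cref{prop:1}, normalize everything by $V$, and show that the normalized gross payoff diverges as $\Psi/V\to\infty$. Setting $x^*=(s^*)^r$, that proof gives
\[
P_I^*\Psi-b_I^*=\Psi\,\frac{x^*(1+x^*-r)}{(1+x^*)^2},
\]
and hence
\[
\frac{U_I^*}{V}=\frac{\Psi}{V}\,g(x^*)-\frac{C_I}{V},
\qquad
g(x):=\frac{x(1+x-r)}{(1+x)^2}.
\]
The key structural point is that $s^*$, and therefore $x^*$, is pinned down by $\Phi(s^*)=\Psi/V$ in \eqref{eq:star}. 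So $x^*$ depends only on $(\Psi/V,r)$. Consequently $U_I^*/V$ depends only on $\Psi/V$, $r$ and $C_I/V$, which justifies writing the threshold as $\bar\lambda_I(C_I/V,r)$.

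Next I would control $g(x^*)$ as the penalty ratio grows. By \Cref{lem:PhiMonotone} and the argument in \Cref{cor:P_I}, $s^*$ is strictly increasing in $\Psi/V$ and tends to $\infty$, so $x^*\to\infty$. Since $g(x)\to 1$ as $x\to\infty$, there is an $x_0(r)$ with $g(x)\ge 1/2$ for all $x\ge x_0$; explicitly, any $x_0\ge 2r+1$ suffices, because $2x(1+x-r)-(1+x)^2=x^2-2rx-1>0$ there. By monotonicity of $\Phi$, the condition $x^*\ge x_0$ holds exactly when $\Psi/V\ge \lambda_0(r):=\Phi(x_0^{1/r})$. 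For such ratios,
\[
\frac{U_I^*}{V}\ \ge\ \tfrac12\,\frac{\Psi}{V}-\frac{C_I}{V}.
\]

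To finish, I would set $\bar\lambda_I:=\max\{\lambda_0(r),\,2,\,2C_I/V\}$, which is finite for any finite $C_I$. Whenever $\Psi/V>\bar\lambda_I$, the previous display gives $U_I^*>0$. Because the non-participation payoff is normalized to zero, entry is then strictly preferred. Including $2$ in the maximum keeps us inside \Cref{ass:APsi}, so \Cref{prop:1} applies and $U_I^*$ is well defined.

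The main obstacle is only the bookkeeping: confirming that $x^*$ is a function of $\Psi/V$ alone (this follows from \eqref{eq:star}), and obtaining a clean explicit lower bound on $g$ rather than just a limit. Positivity of $g(x^*)$ alone, which \Cref{prop:1} already gives since $x^*>r-1$, is not enough, because $g(x^*)$ could be small relative to $C_I/\Psi$. That is why I would use the uniform bound $g\ge 1/2$ on the tail $x\ge x_0$.
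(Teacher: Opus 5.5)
Your proposal is correct and follows essentially the same route as the paper. Both arguments start from the closed form $(P_I^*\Psi-b_I^*)/V=\lambda\,x^*(1+x^*-r)/(1+x^*)^2$ with $\Phi(s^*)=\lambda$, and both use $s^*\to\infty$ to push the fraction toward $1$; the paper simply takes this limit to get divergence, whereas your uniform bound $g\ge 1/2$ for $x\ge 2r+1$ yields the explicit threshold $\bar\lambda_I=\max\{\Phi((2r+1)^{1/r}),\,2,\,2C_I/V\}$.
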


\begin{proof}
Let \(\lambda=\Psi/V\) and \(G_I^*=P_I^*\Psi-b_I^*\). By
\Cref{prop:1},
\[
\frac{G_I^*}{V}
=
\lambda
\frac{(s^*)^r\bigl(1+(s^*)^r-r\bigr)}
{\bigl(1+(s^*)^r\bigr)^2},
\qquad
\Phi(s^*)=\lambda .
\]
Since \(\Phi\) is increasing and unbounded, \(\lambda\to\infty\) implies
\(s^*\to\infty\), and the fraction above converges to \(1\). Thus
\(G_I^*/V\to\infty\). For any finite \(C_I/V\), choose
\(\bar\lambda_I\) so that \(G_I^*/V>C_I/V\) whenever
\(\lambda>\bar\lambda_I\). Then \(U_I^*=G_I^*-C_I>0\).
\end{proof}

The parameter \(\gamma\) is reduced-form and may capture builder integration,
validator relationships, privileged relay access, trusted sequencing, or other
ordering arrangements; it need not correspond to literal validator ownership.
By construction, \(\gamma\) changes only the probability of entering the open
SE-MEV contest. Conditional on entry, equilibrium bids and win probabilities
remain those in \Cref{prop:1}.

Let
\[
M(\Psi,V):= b_I^*+P_E^*\Psi ,
\]
where \(b_I^*\) is the issuer's equilibrium priority spending and
\(P_E^*\Psi\) is its expected regulatory loss from successful evasion. The
issuer's expected cost under ordering control is therefore
\[
  \mathcal{C}(\gamma;\Psi,V)
  =
  C_I+(1-\gamma)M(\Psi,V)+\kappa(\gamma).
\]
We assume \(\kappa\in C^1([0,1))\) with \(\kappa(0)=0\), and use the continuous
extension of \(\mathcal{C}\) to \(\gamma=1\) whenever the limit exists.

\begin{corollary}[Enforcement-side value of ordering control]
\label{cor:2}
Fix \(V>0\) and the SE-MEV equilibrium of \Cref{prop:1}. 
\begin{enumerate}[label=\textup{(\roman*)},leftmargin=2.2em,itemsep=2pt]
  \item \(E_I\) is strictly decreasing in \(\gamma\)
  with slope \(-M(\Psi,V)<0\), and \(M(\Psi,V)/V\to\infty\) as
  \(\Psi/V\to\infty\).

  \item For any \(\gamma_0\in[0,1)\),
  $
  \partial_\gamma\mathcal{C}(\gamma;\Psi,V)|_{\gamma=\gamma_0}<0
  ~\text{iff}~
  \kappa'(\gamma_0)<M(\Psi,V).
$

  \item If \(\kappa'\) is nondecreasing, then for every
  \(\bar\gamma\in(0,1)\) there exists
  \(\bar\theta(\bar\gamma,V)<\infty\) such that
  \(\Psi/V>\bar\theta(\bar\gamma,V)\) implies
  \[
  \partial_\gamma\mathcal{C}(\gamma;\Psi,V)<0
  \quad
  \forall \gamma\in[0,\bar\gamma].
  \]

  \item
If \(\kappa\) extends continuously to \([0,1]\), then every minimizer
\[
\gamma^*(\Psi,V)\in
\arg\min_{\gamma\in[0,1]}\mathcal{C}(\gamma;\Psi,V)
\]
satisfies
\[
\gamma^*(\Psi,V)\to1
\qquad
\text{as }\Psi/V\to\infty
\]
for fixed \(V>0\).
\end{enumerate}
\end{corollary}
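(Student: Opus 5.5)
The plan is to treat \(\mathcal{C}(\gamma;\Psi,V)=C_I+(1-\gamma)M(\Psi,V)+\kappa(\gamma)\) as affine in \(\gamma\) plus the cost term \(\kappa\). All the asymptotics will come from one growth statement, \(M(\Psi,V)/V\to\infty\) as \(\Psi/V\to\infty\), which we take from the equilibrium characterization in \Cref{prop:1}. We read \(E_I\) in (i) as the issuer's expected exposure \(C_I+(1-\gamma)M(\Psi,V)\), i.e.\ \(\mathcal{C}\) without the \(\kappa\) term. Its \(\gamma\)-derivative is \(-M(\Psi,V)\), and \(M>0\) because \(b_I^*=\Psi rP_I^*P_E^*>0\) and \(P_E^*\Psi>0\) by \eqref{eq:FOC}.

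For the growth claim we write \(\lambda=\Psi/V\) and use \(b_I^*=s^*b_E^*\) with \(b_E^*/V=rP_I^*/(1+rP_I^*)\). This gives
\[
\frac{M}{V}\ \ge\ \frac{b_I^*}{V}=s^*\frac{rP_I^*}{1+rP_I^*}.
\]
By \Cref{lem:PhiMonotone} and the unboundedness of \(\Phi\) (exactly as in \Cref{cor:P_I} and \Cref{cor:1}), \(\lambda\to\infty\) forces \(s^*\to\infty\) and \(P_I^*\to1\). Hence \(M/V\ge s^*\cdot r/(r+2)\to\infty\), using \(P_I^*>1/2\). Alternatively one can cite \(T^*\ge b_I^*\) from \Cref{cor:1}. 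This settles (i).

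Part (ii) is immediate, since \(\partial_\gamma\mathcal{C}=-M+\kappa'(\gamma)\) on \([0,1)\).

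For (iii), fix \(\bar\gamma<1\). Since \(\kappa\in C^1([0,1))\), \(\kappa'\) is continuous on the compact interval \([0,\bar\gamma]\), so \(K:=\max_{[0,\bar\gamma]}\kappa'=\kappa'(\bar\gamma)\) is finite; monotonicity gives the maximum at \(\bar\gamma\), though continuity alone suffices. Using the divergence from (i), we choose \(\bar\theta\) so that \(\lambda>\bar\theta\) implies \(M(\Psi,V)>K\). Writing \(M=V\cdot m(\lambda)\) with \(m\) depending only on \(\lambda\) and \(r\), the threshold depends only on \(K/V\). Then (ii) gives a negative derivative on all of \([0,\bar\gamma]\).

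Part (iv) is the main obstacle, because \(\kappa'\) may blow up near \(1\) and the minimizer may sit at an interior point close to \(1\) rather than at \(1\). We argue directly from the values of \(\mathcal{C}\) rather than its derivative. Fix \(\bar\gamma\in(0,1)\) and let \(\kappa\) be continuous on \([0,1]\), hence bounded by some \(B\). For \(\gamma\le\bar\gamma\),
\[
\mathcal{C}(\gamma)-\mathcal{C}(1)\ \ge\ (1-\bar\gamma)M-\kappa(1)+\min\kappa\ \ge\ (1-\bar\gamma)M-2B.
\]
Since \(M\to\infty\), this is eventually strictly positive, so no \(\gamma\le\bar\gamma\) can be a minimizer. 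A minimizer exists by continuity on the compact set \([0,1]\). Every minimizer therefore lies in \((\bar\gamma,1]\) for large \(\lambda\), and since \(\bar\gamma\) was arbitrary, \(\gamma^*\to1\). This value-comparison argument needs neither the monotonicity of \(\kappa'\) nor differentiability at \(1\), which is why we use it for (iv) in place of (iii).
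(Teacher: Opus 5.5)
Your proof is correct and follows the paper's argument part by part, including the value comparison against $\gamma=1$ in (iv). You deviate only in two small places, both improvements. First, you bound $M/V\ge b_I^*/V=s^*\,rP_I^*/(1+rP_I^*)>s^*r/(r+2)$ directly, where the paper goes through $M>T^*$, which needs $\Psi>b_E^*$ and \Cref{cor:1}. Second, you note that continuity of $\kappa'$ on the compact interval $[0,\bar\gamma]$ already bounds it, so the nondecreasing hypothesis in (iii) is unnecessary. One slip: your aside ``cite $T^*\ge b_I^*$ from \Cref{cor:1}'' points the wrong way for a lower bound on $M$. What you would need is $M>T^*$, as in the paper, or $b_I^*\ge T^*-V$. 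Your main argument does not depend on it.
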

\begin{proof}
By effective ordering control,
\[
E_I(\gamma;\Psi,V)
=
C_I+(1-\gamma)M(\Psi,V),
\qquad
M(\Psi,V)=b_I^*+P_E^*\Psi .
\]

\textup{(i)}
Differentiating gives
$
\partial_\gamma E_I=-M(\Psi,V)<0,$ so \(E_I\) is strictly decreasing in \(\gamma\).

Moreover, by \Cref{prop:1}, \(b_E^*=\frac{rP_I^*V}{1+rP_I^*}<V\).
Since \(\Psi\ge2V\), we have \(\Psi>b_E^*\), and hence
\[
M=b_I^*+P_E^*\Psi>b_I^*+P_E^*b_E^*=T^* .
\]
Thus \(M/V>T^*/V\to\infty\) by \Cref{cor:1}.

\textup{(ii)}
Since
$
\partial_\gamma\mathcal{C}
=
\kappa'(\gamma)-M(\Psi,V),
$
the claim follows immediately.

\textup{(iii)}
Fix \(\bar\gamma\in(0,1)\). By \textup{(i)}, for fixed \(V>0\),
\(M(\Psi,V)\to\infty\) as \(\Psi/V\to\infty\). Hence there exists
\(\bar\theta(\bar\gamma,V)<\infty\) such that
\[
M(\Psi,V)>\kappa'(\bar\gamma)
\]
whenever \(\Psi/V>\bar\theta(\bar\gamma,V)\). Since \(\kappa'\) is
nondecreasing, \(\kappa'(\gamma)\le\kappa'(\bar\gamma)\) for all
\(\gamma\in[0,\bar\gamma]\), so
\[
\partial_\gamma\mathcal{C}(\gamma;\Psi,V)
=
\kappa'(\gamma)-M(\Psi,V)<0 .
\]

\textup{(iv)}
Assume \(\kappa\) extends continuously to \([0,1]\), and let
\[
\gamma^*(\Psi,V)\in
\arg\min_{\gamma\in[0,1]}\mathcal{C}(\gamma;\Psi,V).
\]
Fix \(V>0\) and any \(\varepsilon\in(0,1)\). For
\(\gamma\in[0,1-\varepsilon]\),
\[
\mathcal{C}(\gamma;\Psi,V)-\mathcal{C}(1;\Psi,V)
\ge
\min_{\tilde\gamma\in[0,1-\varepsilon]}
\bigl(\kappa(\tilde\gamma)-\kappa(1)\bigr)+\varepsilon M(\Psi,V)>-\infty
\]
by continuity of \(\kappa\). Since \(M(\Psi,V)\to\infty\) as
\(\Psi/V\to\infty\), the right-hand side is positive for sufficiently
large \(\Psi/V\). Hence no minimizer lies in \([0,1-\varepsilon]\), so
\(\gamma^*(\Psi,V)>1-\varepsilon\) eventually. Because
\(\varepsilon\) is arbitrary,
\[
\gamma^*(\Psi,V)\to1 .
\]
\end{proof}

\section{Robustness to the Outflow Ratio Threshold $\alpha$}
\label{app:alpha-robustness}

We assess how the choice of the outflow ratio threshold $\alpha$
affects the set of evasion sessions identified by the procedure of
Section~\ref{sec:session-construction}. The minimum outflow threshold
is fixed at $\beta = \$1{,}000$. Out of $110{,}199$ candidate
sessions, Table~\ref{tab:alpha-counts} reports the number retained,
the number of distinct addresses, and the change in retained
sessions relative to $\alpha = 10\%$.

\begin{table}[h]
\centering
\caption{Retained sessions and addresses across $\alpha$, with
$\beta = \$1{,}000$ fixed.}
\begin{tabular}{rrrrr}
\toprule
$\alpha$ & Retained & Retention & Addresses & $\Delta$ vs.\ 10\% \\
\midrule
1\%   & 25{,}399 & 23.05\% & 884 & $+37.8\%$ \\
5\%   & 20{,}952 & 19.01\% & 856 & $+13.7\%$ \\
10\%  & 18{,}432 & 16.73\% & 836 & --- \\
30\%  & 13{,}475 & 12.23\% & 779 & $-26.9\%$ \\
50\%  & 10{,}745 &  9.75\% & 719 & $-41.7\%$ \\
70\%  &  8{,}819 &  8.00\% & 676 & $-52.2\%$ \\
90\%  &  6{,}955 &  6.31\% & 637 & $-62.3\%$ \\
\bottomrule
\end{tabular}

\label{tab:alpha-counts}
\end{table}

Table~\ref{tab:alpha-amounts} summarises the outflow distribution
and the realised outflow ratio within each retained set.

\begin{table}[h]
\centering
\caption{Outflow size and realised ratio within retained sessions.}
\resizebox{1\linewidth}{!}{
\begin{tabular}{rrrrr}
\toprule
$\alpha$ & Median (USD) & Mean (USD) & Median ratio & Mean ratio \\
\midrule
1\%  & 43{,}000 & 166{,}097 & 35.13\% & 46.46\% \\
5\%  & 55{,}810 & 194{,}854 & 52.12\% & 55.77\% \\
10\% & 65{,}000 & 209{,}587 & 65.52\% & 62.40\% \\
30\% & 77{,}410 & 239{,}672 & 91.72\% & 78.55\% \\
50\% & 80{,}000 & 252{,}998 & 97.96\% & 88.46\% \\
70\% & 80{,}080 & 257{,}481 & 99.46\% & 94.78\% \\
90\% & 79{,}133 & 217{,}464 & 99.95\% & 98.62\% \\
\bottomrule
\end{tabular}
\label{tab:alpha-amounts}
}
\end{table}

Three observations follow. First, the count of retained sessions is
moderately sensitive to $\alpha$, falling by $62.3\%$ as $\alpha$
rises from $10\%$ to $90\%$, while the count of distinct addresses
falls by only $23.8\%$. The marginal sessions removed by stricter
$\alpha$ are concentrated within addresses that already contribute
multiple sessions, so the population of suspect addresses is stable
across the range tested. Second, the outflow size distribution is
nearly invariant for $\alpha \geq 30\%$: the median lies in
$[\$77.4\text{k}, \$80.1\text{k}]$ and the mean in
$[\$239.7\text{k}, \$257.5\text{k}]$, with variation below $5\%$.
Raising $\alpha$ removes small or partial transfers but leaves the
high-value behaviour unchanged. Third, the realised outflow ratio
increases monotonically with $\alpha$, and its gap to the threshold
narrows from $+55.5$ percentage points at $\alpha = 10\%$ to
$+9.95$ at $\alpha = 90\%$, confirming that retained sessions
approach a near-complete sweep as $\alpha$ tightens.

The transition between $\alpha = 10\%$ and $\alpha = 30\%$ accounts
for most of the change in behavioural composition (median ratio
$65.5\% \to 91.7\%$); beyond $\alpha = 50\%$ all reported quantities
plateau. We adopt $\alpha = 10\%$ as the baseline for breadth of
coverage, and report $\alpha = 50\%$ as a sensitivity check focused
on near-complete liquidations.

\section{Analysis of Gas Ratio of Private and Public Transactions }
\label{app:private_tx}

\begin{table}[htbp]
\centering
\caption{Quantile Comparison of Gas Ratio of Private and Public Transactions }
\label{tab:quantile}
\begin{tabular}{lccc}
\toprule
Quantile & Non-private & Private & priv/pub \\
\midrule
P5  & 0.830 & 0.903 & 1.09$\times$ \\
P10 & 0.890 & 0.944 & 1.06$\times$ \\
P25 & 0.952 & 0.984 & 1.03$\times$ \\
\textbf{P50 (median)} & \textbf{0.991} & \textbf{1.040} & 1.05$\times$ \\
\textbf{P75} & 1.093 & \textbf{1.545} & \textbf{1.41$\times$} \\
\textbf{P90} & 1.340 & \textbf{3.074} & \textbf{2.29$\times$} \\
\textbf{P95} & 1.685 & \textbf{3.963} & 2.35$\times$ \\
P99 & 5.583 & 7.573 & 1.36$\times$ \\
\textbf{Mean} & 1.202 & \textbf{1.587} & \textbf{1.32$\times$} \\
\bottomrule
\end{tabular}
\end{table}

As Table~\ref{tab:quantile} shown, the medians differ by only 5\%, yet the right tails diverge sharply: a 41\% gap at P75 and 129\% at P90. This explains why median-based comparisons obscure the underlying difference in distributional shape.


\begin{table}[h]

\centering
\caption{Tail Concentration}
\resizebox{1\linewidth}{!}{
\label{tab:concentration}
\begin{tabular}{lccc}
\toprule
Effective Gas Ratio Threshold & Non-private & Private & $\Delta$ \\
\midrule
$\geq 1.0$ (above block median)   & 43.3\% & \textbf{64.9\%} & +21.6 pp \\
$\geq 1.5$ ($\geq$50\% premium)   & 6.1\%  & \textbf{27.0\%} & \textbf{+21.0 pp} \\
$\geq 2.0$ ($\geq$2$\times$)      & 4.0\%  & \textbf{14.9\%} & +10.9 pp \\
$\geq 3.0$ ($\geq$3$\times$)      & 3.3\%  & \textbf{10.8\%} & +7.5 pp \\
\bottomrule
\end{tabular}
}
\end{table}

We analyze the tail concentration. Table~\ref{tab:concentration} shows over a quarter of private transactions pay at least 1.5$\times$ the block-median gas price:4.4$\times$ the rate observed in non-private transactions (6.1\%).

\section{Stake-Sensitive Public Fee Competition}

\label{app:fee-stake}
\begin{table*}[htbp]
\centering
\caption{The Relationship between episode-level stake and public fee premia across pre-sanction time windows. $L_{\text{episode}}$ proxies the value controllable within an evasion episode. Public fee premia are measured as the ratio between fees actually paid and counterfactual fees computed at the block-median gas price. Positive associations appear in short-horizon windows but weaken in the broader tactical-reactive regime.}

\label{tab:fee-stake}
\begin{tabular}{l l r r r r}

\toprule
Window & Interpretation & $n$ & Spearman $\rho$ & OLS $\hat{\beta}$ & $p$-value \\
\midrule
$\Delta \leq 242$s & Canonical race & 9 & 0.4407 & -- & -- \\
$\Delta \leq 1200$s & Relaxed short horizon & 20 & 0.4910 & 0.2317 & 0.0017 \\
$\Delta \leq 3600$s & Relaxed short horizon & 31 & 0.1150 & 0.1366 & 0.0226 \\
$242\text{s} < \Delta \leq 86400\text{s}$ & Tactical-reactive & 456 & -0.0387 & -0.0172 & 0.0594 \\
\bottomrule
\end{tabular}
\end{table*}
\end{document}